\documentclass[a4paper,twocolumn,10pt]{quantumarticle}
\pdfoutput=1
\usepackage[numbers,sort&compress]{natbib}
\usepackage[colorlinks,breaklinks]{hyperref}
\usepackage{amsmath,amssymb,amsthm,bm,dsfont,mathdots}
\usepackage{physics}
\usepackage{graphicx}

\newcommand{\id}{\ensuremath{\mathds{1}}}
\newcommand{\Irr}{\ensuremath{\mathrm{Irr}}}
\DeclareMathOperator{\polylog}{polylog}
\renewcommand{\vec}[1]{\boldsymbol{#1}}

\newtheorem{thm}{Theorem}

\newtheorem{obs}[thm]{Observation}
\newtheorem{lem}[thm]{Lemma}

\begin{document}
\nonfrenchspacing

\title{Phases of matrix-product states with symmetries and measurements: Finite nilpotent groups}

\author{David Gunn}
\affiliation{TUM School of Natural Sciences, Technical University of Munich, James-Franck-Str. 1, D-85748 Garching, Germany}
\affiliation{Munich Center for Quantum Science and Technology (MCQST), Schellingstr. 4, 80799 München, Germany}
\affiliation{International Iberian Nanotechnology Laboratory (INL), Av. Mestre José Veiga, 4715-330 Braga, Portugal}
\author{Georgios Styliaris}
\affiliation{Max Planck Institute of Quantum Optics, Hans-Kopfermann-Str. 1, Garching 85748, Germany}
\affiliation{Munich Center for Quantum Science and Technology (MCQST), Schellingstr. 4, 80799 München, Germany}
\author{Barbara Kraus}
\affiliation{TUM School of Natural Sciences, Technical University of Munich, James-Franck-Str. 1, D-85748 Garching, Germany}
\affiliation{Munich Center for Quantum Science and Technology (MCQST), Schellingstr. 4, 80799 München, Germany}
\author{Tristan Kraft}
\affiliation{TUM School of Natural Sciences, Technical University of Munich, James-Franck-Str. 1, D-85748 Garching, Germany}
\affiliation{Munich Center for Quantum Science and Technology (MCQST), Schellingstr. 4, 80799 München, Germany}

\begin{abstract}
We study phases of one-dimensional matrix-product states (MPS) when transformations are restricted to symmetric local circuits supplemented with symmetric measurements and feedforward ($G$-CMF). Building on the framework introduced in~\href{https://doi.org/10.1103/PhysRevB.111.115110}{Gunn~\textit{et al.},~Phys.~Rev.~B~\textbf{111},~115110~(2025)}, we extend the analysis to all finite nilpotent groups for which we obtain a complete classification of $G$-CMF phases. We construct explicit symmetry‑respecting protocols that map any symmetry-protected topological (SPT) or non-normal (GHZ‑type) MPS to the trivial phase---and vice versa---with success probability approaching one in the thermodynamic limit. The key technical ingredient is a finite hierarchical structure of irreducible representations of nilpotent groups, which enables successive rounds of symmetric measurements to systematically reduce non‑abelian components to abelian ones. Our results demonstrate that allowing symmetric measurements and feedforward fundamentally simplifies the phase structure of 1D systems with nilpotent symmetries: all SPT and non-normal MPS phases collapse into a single asymptotically equivalent phase under $G$-CMF transformations.

\end{abstract}

\maketitle

\section{Introduction}\label{sec:Intro}

The classification of quantum phases of matter is a central problem in many-body physics. As in many areas of theoretical physics and mathematics, symmetries serve as a fundamental organizing principle. For the case of gapped phases in one spatial dimension, this perspective has led to a rich structure of phases beyond the traditional Landau paradigm, such as symmetry-protected topological (SPT) phases. These phases can be systematically classified within the framework of tensor networks, particularly matrix-product states (MPS)~\cite{cirac2021matrix}. Within this formalism, it is well-established that the phases of MPS protected by global on-site symmetries correspond to elements of the second cohomology group of the symmetry group with coefficients in $U(1)$~\cite{Chen2011_Phases1,Schuch2011_Phases2}. This classification reflects the interplay between the symmetry group and the structure of the tensors that define the MPS.

An alternative yet equivalent perspective on phases has emerged, based on the action of quantum circuits. In this formalism, two translation-invariant MPS (parameterized by system size $N$) are said to belong to the same phase if there exists a local quantum circuit with depth $O(\polylog N)$ that maps one to the other~\cite{chen2010local,hastings2005quasiadiabatic,nachtergaele2019quasi,coser2019classification}. When symmetries are present, each gate in the circuit is required to commute with the given representation of the symmetry group. This reformulation has proven powerful in connecting notions of phase equivalence with computational complexity and operationally meaningful transformations.

A practically motivated extension of this framework, inspired by earlier ideas~\cite{raussendorf2001one,raussendorf2005long,aguado2008creation,meignant2019distributing,watts2019exponential}, is to include, along with quantum circuits, quantum measurements and feedforward~\cite{piroli2021quantum}. This refers to the case in which unitary gates can be classically conditioned on outcomes of previous measurements in parts of the circuit. This generalized setting is especially natural from the viewpoint of entanglement theory, which is based on the paradigm of local operations and classical communication (LOCC), which includes measurements and feedforward~\cite{chitambar2014everything}. In this context, a natural question emerges: How does the classification of phases change when measurements and feedforward are allowed, with or without symmetry constraints?

Several results have been obtained in recent years addressing this question without symmetry constraints, particularly for 1D systems~\cite{piroli2021quantum,lootens2025lowdepth,stephen2024preparing,smith2024constant,sahay2025classifying,malz2024preparation} and for topological phases in 2D~\cite{tantivasadakarn2021long,bravyi2022adaptive,lu2022measurement,tantivasadakarn2023hierarchy,li2023symmetry,zhang2024characterizing,ren2025efficient}. In most of these cases, the focus is on how the addition of measurements and feedforward expands the landscape of possible transformations, including experimental implementations~\cite{foss2023experimental,iqbal2024non,chen2025nishimori}.

When symmetries are imposed, it is essential that \emph{all} operations---including measurement, not only the unitary gates---respect the symmetry. To address this, Ref.~\cite{Gunn2025} introduced a consistent symmetry-respecting framework called \emph{G-symmetric Circuits with Measurements and Feedforward} ($G$-CMF). Within this setting, every quantum operations must be symmetric with respect to a fixed on-site representation of a symmetry group $G$. In particular, Ref.~\cite{Gunn2025} established the following three key results: (i) For a finite abelian symmetry group $G$, all symmetric normal MPS collapse into a single $G$-CMF phase. This is in contrast to the unitary-only case, where the classification via cohomology classes distinguishes multiple SPT phases. (ii) The same $G$-CMF phase also contains all non-normal (i.e., long-range entangled) MPS. In the unitary-only case, these states belong to a distinct phase even in the absence of symmetry protection. (iii) The $G$-CMF classification for normal MPS also collapses to a single phase for class-2 nilpotent groups, a family that includes all finite abelian groups but also extends beyond them. Together, these results demonstrate a dramatic simplification of phases enabled by symmetric measurements and feedforward.

Here, we considerably extend these results by providing a systematic classification of $G$-CMF phases for \emph{all} finite nilpotent groups. Nilpotent groups have the special property that the adjoint action, ${\rm ad}_g(h)=[g,h]$, where $[g,h]=g^{-1}h^{-1}gh$ denotes the group commutator, becomes trivial after finitely many iterations: for every $g,h\in G$ one has $({\rm ad}_g)^M(h)=e$ for some finite integer $M$. The integer $M$ is the \emph{nilpotency class} of $G$. Although nilpotent groups are typically non-abelian---except in the case of class-1 nilpotent groups, which are precisely the abelian groups---the converse does, obviously, not hold; for example, the symmetric group $S_3$ is non-abelian but not nilpotent.

Below, we construct explicit $G$-CMF protocols that transforms a distinguished representative of any SPT phase into the trivial phase---and vice versa---using a finite number of rounds of symmetric measurements, where the number of rounds is solely determined by the nilpotency class of $G$. Although our transformations are not exact, we show that they succeed with probability approaching one in the thermodynamic limit $N\rightarrow\infty$. This establishes that all SPT phases of nilpotent groups become trivial once symmetric measurements and feedforward are allowed. We then consider states with long-range correlations, i.e., non-normal MPS such as GHZ-type states. For these, we present a protocol---combining the protocol used for SPT phases with an additional fusion measurement step---that transforms the trivial phase into non-normal GHZ states. Together with the reverse transformation, constructed analogously to the SPT case, this demonstrates that such non-normal phases also collapse to the trivial phase under $G$-CMF transformations.

\begin{figure*}
    \centering
    \includegraphics[width=0.85\linewidth]{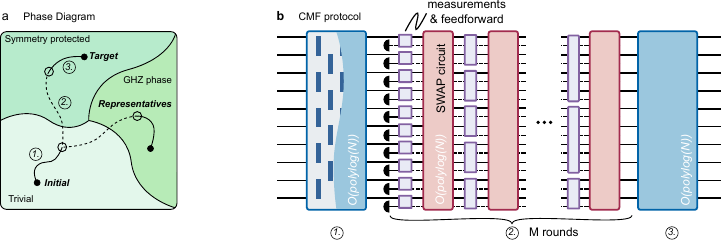}
    \caption{(a) We consider transformations between states belonging to different phases. Starting from an initial state (black dot), for instance in the trivial phase, we first transform the state into a phase representative (black circle). This transformation can be implemented by a local symmetric circuit of depth $O(\polylog N)$~\cite{chen2010local,hastings2005quasiadiabatic,nachtergaele2019quasi,coser2019classification}. In the second step, as we will show in detail below, the phase representative is transformed into representatives of other phases using CMF transformations, which include symmetric measurements. In the final step, we again apply a local symmetric circuit of depth $O(\polylog N)$ to obtain the desired target state. By constructing explicit protocols that transform between phase representatives, we thereby demonstrate a trivialization of the phase diagram. (b) A circuit representation of a CMF transformation is shown schematically. Blue boxes represent local symmetric unitary circuits corresponding to transformations within phases, i.e., the solid lines in panel (a). Red boxes represent CMF layers, each consisting of multiple rounds of symmetric measurements followed by short-depth symmetric quantum circuits that may depend on the measurement outcomes. For class-$M$ nilpotent groups, our protocols require exactly $M$ such rounds. The associated circuits are composed of sequences of nearest-neighbor SWAP gates. While some of our protocols make use of auxiliary systems, other achieve the desired transformation without auxiliary systems.}
    \label{fig:placeholder}
\end{figure*}

As we consider non-abelian groups, the mathematical structure underlying our analysis becomes richer. At the center of our results is the fact that the irreducible representations of nilpotent groups can be organized in a finite hierarchy (see Secs.~\ref{Sec:OutlineOfTheProtocol}, and~\ref{sec:NilpotentGroups}, and Refs.~\cite{Gelaki2008_NilpotentFusionCategories,lootens2025lowdepth}). This hierarchy allows us to design protocols that repeatedly measure and block physical sites, by rules derived from representation theory, in such a way that after each round of the protocol the resulting state descends to a lower level of the hierarchy. Since nilpotent groups have a finite hierarchy, the procedure necessarily terminates after finitely many rounds, ultimately reaching the lowest level. At this stage the state can be transformed into the desired target state via a single round of measurements.

The paper is organized as follows. Section~\ref{sec:prelim} provides the necessary mathematical background on phases of MPS. Section~\ref{Sec:OutlineOfTheProtocol} then outlines the key idea of the protocol that allows us to show that all phases trivialize. Section~\ref{sec:NilpotentGroups} provides a more detailed analysis of the mathematical structure of nilpotent groups, including the origin of the finite hierarchy of irreducible representations. Sections~\ref{sec:ProtocolSPT} and~\ref{sec:GHZProtocol} then establish that SPT and non-normal phases trivialize under $G$-CMF for all finite nilpotent groups.

\section{Preliminaries}\label{sec:prelim}

A tensor $A$ defines a translationally invariant MPS with periodic boundary conditions on $N$ sites via 
\begin{equation}
\ket{\psi_N[A]}=\sum_{i_1, \dots,i_N} \mathrm{tr}[A^{i_1}\cdots A^{i_N}]\ket{i_1\dots i_N}.
\end{equation}
Here, $A^i$ is a $D\times D$ matrix for all $i\in\{0,\dots,d-1\}$, where $d\in\mathbb{N}$ is the physical dimension and $D\in\mathbb{N}$ is the \textit{bond dimension}. Mapping the tensor $A^i$ to the product $\tilde{A}^{i_1,\dots,i_l}=\prod_{j=1}^l A^{i_j}$ corresponds to ``blocking" $l$ physical sites together into a supersite. Any MPS tensor, after blocking sufficiently many sites, can be brought into a canonical form, $A^i = \bigoplus_{j} A_j^i$, with the property that the $A^i$ span all matrices with the same block structure~\cite{Cirac2017_FundThm1}. An MPS is called \emph{normal} if, after blocking, its tensor in canonical form  has only one block. Given a tensor A, one may also define its so-called \emph{fiducial state} as 
\begin{equation}\label{eq:fiducial}
    \ket{A}=\sum_{i,l,j}(A^i)_{lj}\ket{l}\otimes\ket{i}\otimes\ket{j},
\end{equation}
where the system in the middle represents the physical site, whereas systems on the right and left represent virtual sites.

Any MPS tensor in its canonical form can be associated with a parent Hamiltonian for which the MPS is a ground state~\cite{Fannes1992,Schuch2011_Phases2,PerezGarciaEtAl2007_MPSrepresentations}. Then, two systems are said to be in the same phase if their parent Hamiltonians can be continuously connected by a path of local Hamiltonians without closing the spectral gap above the ground-state subspace in the thermodynamic limit. An alternative yet equivalent definition can be stated in terms of quantum circuits. Here, two MPS belong to the same phase if they can be transformed into one another with ``short-depth'' local circuits~\cite{chen2010local}. More precisely, two MPS, $\ket{\psi_N[A]}$ and $\ket{\psi_N[B]}$, belong to the same phase if there exists a sequence of unitaries, $U_N$, such that (i) $\| \ket{\psi_N[A]} - U_N \ket{\psi_N[B]}\|$ converges to $0$ as $N \to \infty$, and (ii) each $U_N$ can be decomposed into a $O(\polylog N)$-depth quantum circuit consisting of $O(1)$-local nearest neighbor gates. Refs.~\cite{Chen2011_Phases1,Schuch2011_Phases2} showed that two MPS belong to the same phase if and only if they have the same number of blocks in canonical form.

In addition to this circuit-based definition of phases, one can consider systems which are constrained to respect a given symmetry. In this case, new phases of matter emerge: so-called \emph{symmetry protected topological} (SPT) phases. In 1D, these phases were characterized in Refs.~\cite{pollmann2010entanglement, Chen2011_Phases1,Schuch2011_Phases2}. In this work, we consider finite global on-site symmetries of the form $U_g^{\otimes N}$, where $U_g$ with $g\in G$, is a unitary linear representation of a finite group $G$, and $N$ is the number of sites. A symmetric MPS then satisfies
\begin{equation}
    U_g^{\otimes N}\ket{\psi_N[A]}\propto \ket{\psi_N[A]}.
\end{equation}
The circuits defining the phases are then restricted such that each local gate acting on $n$ sites must commute with the symmetry $U_g^{\otimes n}$. Phases are then characterized by how the tensor $A$ transforms under the action of the symmetry~\cite{pollmann2010entanglement,Chen2011_Phases1,Schuch2011_Phases2}.

For normal MPS, after blocking and bringing it to canonical form, the symmetry condition can be expressed at the virtual level as~\cite{SanzEtAl2009_MPSHamiltonians, Schuch2011_Phases2}
\begin{equation}\label{eq:injectivesym}
    \sum_j (U_g)_{ij} A^j = e^{i\phi_g} \omega_{g}^\dagger A^i \omega_{g}.
\end{equation}
Here, the phases $e^{i\phi_g}$ form a 1D unitary irreducible representation (irrep) of $G$. The $\omega_g$ are referred to as \emph{virtual symmetries} and form a \emph{projective representation} of $G$, meaning that 
\begin{equation}
    \omega_g\omega_h= \gamma(g,h) \omega_{gh},
\end{equation}
for all $g,h\in G$, where $\gamma: G\times G \rightarrow U(1)$ is a \emph{cocycle}. As any transformation $\omega_g\mapsto \nu(g) \omega_g$ leaves Eq.~\eqref{eq:injectivesym} invariant for any $\nu_g\in U(1)$, the cocycle $\gamma (g,h)$ is defined only up to the equivalence relation
\begin{equation}
    \gamma (g,h)\sim \frac{\nu(gh)}{\nu(g) \nu(h)} \gamma (g,h).
\end{equation}
The resulting equivalence classes can be shown to be isomorphic to the second cohomology group $H^2(G,U(1))$. We label these cohomology classes by $\mu$, with $\mu=0$ denoting the trivial class corresponding to linear representations. The phase of a normal MPS with respect to a given symmetry is fully specified by the cohomology class of the corresponding projective representation~\cite{pollmann2010entanglement, Chen2011_Phases1,Schuch2011_Phases2}.

For non-normal MPS, phases are specified by a subgroup $H\le G$ which determines a permutation of the blocks in the canonical form, and by a cohomology class $\mu\in H^2(H,U(1))$ of $H$~\cite{Chen2011_Phases1, Schuch2011_Phases2}. That is, phases are specified by a tuple $(H,\mu)$, with SPT phases corresponding to $(G,\mu\ne 0)$. The simplest case of non-normal MPS corresponds to $(H=1,\mu=0)$, for which the symmetry condition takes the form
\begin{equation}\label{eq:noninjectivesym}
    \sum_j (U_g)_{ij}\,\Bigl( \bigoplus_k A^j_k \Bigr)
    = P_g^\dagger \,\Bigl( \bigoplus_k e^{i\phi_g^k} A^i_k \Bigr)\, P_g,
\end{equation}
where $ e^{i\phi_g^k}$ are 1D irreps and the virtual symmetries $P_g$ are permutation representations of the subgroup $H$ permuting the blocks~\cite{Schuch2011_Phases2}.

Note that the different phases are labeled by group properties~\cite{note1} and are, in this sense, independent of the particular physical on-site representation $U_g$. However, for a fixed on-site symmetry, there may be no corresponding symmetric MPS that belongs to a given phase; see Ref.~\cite{Gunn2025} for an example. To bypass this technicality, we assume throughout this paper that the physical on-site symmetry is given by the regular representation,
\begin{equation}
U_g^{{\rm reg}}=\sum_{h \in G}\ket{g\circ h}\bra{h},
\end{equation}
where $\{\ket{h}\}_{h\in G}$ is an orthonormal basis and $\circ$ the group multiplication in $G$. The regular representation of a finite group contains every irrep of $G$, with multiplicity equal to its dimension, and moreover any finite-dimensional representation of $G$ appears as a subrepresentation of $(U_g^{\mathrm{reg}})^{\otimes k}$ for some finite $k$. Consequently, for any desired effective symmetry $U_g^{{\rm eff}}$ there exists a finite blocking-length---independent of system-size---at which the physical symmetry acts as $U_g^{{\rm eff}}$ on a local subspace. As a result, one can show that, with $U_g^{{\rm reg}}$ as the physical on-site symmetry, for every phase $(H,\mu)$ there is a state that belongs to that phase~\cite{Gunn2025}.

One can also consider phases of MPS with respect to short-depth circuits and \emph{measurements}~\cite{piroli2021quantum}. In this framework, one may introduce $O(1)$ auxiliary systems on-site, entangle them with the system via short-depth circuits, and measure them. Crucially, the circuits may also depend on the outcomes of previous measurement outcomes. These combined operations are referred to as \textit{circuits, measurements and feedforward} (CMF). Naturally, when considering phases with respect to CMF, what is relevant is the cumulative circuit depth of the protocol.

In this work, we restrict ourselves to projective measurements, i.e., to collections $\{P_i\}_i$ with $P_i^2=P_i\ge0$ and $ \sum_i P_i=1$. Moreover, we consider \emph{asymptotically deterministic transformations}; that is, we do not allow post-selection, and we require the success probability of the protocol to converge to one as $N \rightarrow \infty$. Two MPS are therefore said to belong to the same CMF phase if they can be asymptotically transformed in one another---in the sense of convergent sequences above---via CMF protocols of cumulative circuit-depth $O(\polylog N)$. In Ref.~\cite{piroli2021quantum}, it was shown that the inclusion of measurements and feedforward collapses the entire phase diagram of MPS into a single phase.

As in the case without measurements, one may also restrict CMF operations by imposing a global symmetry. This is precisely the setting considered in this work. As CMF includes quantum circuits as a subset of its allowed operations, the gates appearing in these circuits must satisfy the usual symmetry constraint: each local gate must commute with the symmetry. In addition, one imposes that the measurement operator commute with the symmetry. That is, for each on-site measurement $\{P_i\}_i$ acting on $n$ sites, we require $[P_i,U_g^{\otimes n}]=0$ for all $i$. One also imposes that auxiliary systems may be added and discarded only on-site and only if they are in a symmetric pure state, unentangled with the rest of the system. Since the discarded and added auxiliary systems need not be the same, one can implement gates that commute up to a phase~\cite{Gunn2025}. We refer to this CMF constrained by symmetries as $G$-CMF, where $G$ refers to a given symmetry group~\cite{Gunn2025}. Note, that $G$-CMF without measurements coincides with symmetric circuits, and it coincides with CMF when the symmetry constraints are not imposed (see also Ref.~\cite{Gunn2025} for a more in-depth discussion). As a result, any phase diagram derived from $G$-CMF operations will be a coarse-graining of the standard phase diagram with respect to symmetric circuits and a fine-graining of the phase diagram with respect to CMF.

Under these conditions, we showed in Ref.~\cite{Gunn2025}, that the phase diagram of MPS collapses to a single phase for finite \textit{abelian} symmetry groups. The proof proceeds by selecting a representative for each phase and explicitly constructing symmetric CMF transformations between any pair of such representatives. Since symmetric CMF includes symmetric quantum circuits as a subset of its operations, the existence of these transformations implies that the entire phase diagram collapses; once a representative of a phase can be reached, standard symmetric circuits allow one to reach any other state within that phase.

A key ingredient in the abelian case is the ability to perform symmetric measurements with \emph{rank‑1} projectors. Because all irreducible representations of a finite abelian group are one-dimensional, the on-site symmetry decomposes into 1D invariant subspaces, allowing complete symmetric projective measurements consisting entirely of rank‑1 projectors. This is demonstrated most clearly in considering transformations to the trivial phase. Taking a symmetric product state as a representative of the trivial phase, one can simply perform symmetric rank‑1 measurements on each site; then, clearly, one can transform any state, in any phase, to the trivial phase.

For abelian symmetries, such a measurement is possible. In contrast, for non-abelian symmetries, where at least one higher-dimensional irrep exists, the condition $[P_i, U_g] = 0$ for all $g \in G$ may not restrict the measurement operators $P_i$ to rank-1 projectors. In fact, for a symmetry $U_g$ containing a higher-dimensional irrep, there exists no complete projective measurement consisting only of rank-1 projectors. Thus, a direct transformation to a trivial state through a single round of on-site measurements does not exist in general. However, if a state is locally supported only on invariant subspaces carrying commutative (1D) irreps, then one may, for that given state, deterministically obtain outcomes corresponding to rank-1 projectors. Motivated by this, Ref.~\cite{Gunn2025} showed that phases with respect to the non-abelian dihedral group of eight elements, $D_8$ (the symmetry group of the square) trivialize under $G$-CMF.

The key structural ingredient there was the representation theory of $D_8$. This group has five linear irreps. Let us label them $\{1,2,3,4,5\}$, where $1$ denotes the trivial representation, $2,3$ and $4$ are commutative irreps, and $5$ is a non-commutative irrep. Crucially, the tensor product of two copies of the non-commutative irrep decomposes fully into commutative irreps; namely,
\begin{equation}
5\otimes 5 \cong 1\oplus 2 \oplus 3 \oplus 4.
\end{equation}
This implies that, after performing local projections onto each irrep $P_{\alpha \in \{1,2,3,4,5\}}$, any outcome corresponding to the non‑commutative irrep $P_{\alpha = 5}$ can be combined with another such outcome to form a composite \emph{supersite}~\cite{note2}. The resulting supersite has support only on subspaces corresponding to one-dimensional irreps. A subsequent symmetric measurement on this supersite then deterministically yields an outcome corresponding to a rank-1 projector. In this way, Ref.~\cite{Gunn2025} demonstrated that SPT phases ($H=G$, $\mu=1)$ and the non-normal phase ($H=1, \mu=0)$ of $D_8$ trivialize. In what follows, we extend these protocols beyond $D_8$ to all nilpotent groups. 

\section{Outline of the protocol}\label{Sec:OutlineOfTheProtocol}

Here, we outline how the above example can be generalized to all nilpotent groups. The key point is that nilpotent groups admit a finite commutator structure that induces a corresponding finite hierarchy of irreducible representations. We will recall only those properties of nilpotent groups that are essential for understanding the derivation of the protocols; a more detailed discussion on nilpotent groups, together with explicit protocols, is provided in the subsequent sections.

Let $G$ be a nilpotent group, and let $\Irr$ denote its set of irreps. The irreps of nilpotent groups have the following properties (see Sec.~\ref{sec:NilpotentGroups} and Ref.~\cite{Gelaki2008_NilpotentFusionCategories} for more details):
\begin{enumerate}
    \item {\it Hierarchy of irreducible representations:} We can define (see Eq. \eqref{eq:Irra}) a finite hierarchy as a sequence of subsets
    \begin{align}\label{eq:Hierarchy}
        \Irr_{1} \subseteq \Irr_{2} \subseteq  \dots \subseteq \Irr_M\equiv \Irr,
    \end{align}
    where $\Irr_{1}$ is the set of all 1D irreps of $G$, and $M$ is the nilpotency class.
    \item {\it Equivalence classes on $\Irr_m$:} On each set $\Irr_m$, we can define equivalence classes as follows: For $\alpha,\beta \in \Irr_m$ we define $\alpha \sim_m \beta $ if there exists a $\gamma \in \Irr_{m-1}$ such that the decomposition of $U^\gamma_g \otimes U^\beta_g$ into irreps contains $U^\alpha_g$ (see Appendix \ref{App:IrrepGroups} for details). In this case, we say that $\alpha$ is contained (or appears in) $\gamma \otimes \beta$, or simply write $\alpha \in \gamma \otimes \beta$. 
    \item {\it Group of equivalence classes on level $m$:} For each $m$ the equivalences classes $\Irr_m/{\sim}=\{[\alpha]_m\}$ form a group with a multiplication $\circ$ defined in Sec.~\ref{sec:NilpotentGroups}. This group is isomorphic to a finite abelian group and has the following properties: (i) The identity element $[1]_m\in \Irr_m/{\sim}$ is composed of all the irreps in $\Irr_{m-1}$. (ii) The inverse element of $[\alpha]_m$ is $[\alpha^*]_m$, where $\alpha^*$ is the conjugate representation of $\alpha$ (see Appendix \ref{App:IrrepGroups}). That is, $[\alpha]_m \circ [\alpha^*]_m=[1]_m=\Irr_{m-1}$. (iii) For any $\alpha,\beta \in \Irr_m$ we have that $U^\alpha_g \otimes U^\beta_g$ decomposes only into irreps belonging to $[\alpha]_m\circ[\beta]_m\in \Irr_m/{\sim}$. That is, 
    \begin{equation}
        U^\alpha_g \otimes U^\beta_g=\bigoplus_{\gamma\in \Gamma} U_g^\gamma, \quad\text{with}\;\Gamma\subseteq[\alpha]_m\circ[\beta]_m.
        \label{eq:TensorDecompOfIrreps}
    \end{equation}
\end{enumerate}
To connect this structure to measurements in our protocols, consider a block of $n$ sites with symmetry action $U_g^{\otimes n}$. Let $P^{(n)}_\alpha$ denote the projector onto the subspace of $\mathcal{H}^{\otimes n}$ that contains \emph{all} invariant subspaces of $U_g^{\otimes n}$ that transform under the irrep $\alpha$. In the following, we will call this the subspace corresponding to an irrep $\alpha$ and ignore multiplicities. Moreover, grouping all such projectors over an equivalence class yields
\begin{equation}
    P^{(n)}_{[\alpha]_m}\equiv\sum_{\alpha\in[\alpha]_m} P_\alpha^{(n)}.
\end{equation}
By the property in Eq.~\eqref{eq:TensorDecompOfIrreps}, these projectors satisfy a fusion rule
\begin{equation}\label{eq:projectorRule}
    P_{[\alpha]_m}^{(n)}\otimes P_{[\beta]_m}^{(n')} \subseteq P_{[\alpha]_m \circ [\beta]_m}^{(n + n')},
\end{equation}
where $\subseteq$ indicates that the support of the left-hand side is contained within that of the right-hand side. This property will be central for all our protocols.

To see this, consider the case $m=M$ and suppose we first measure single sites, i.e., $n=1$. The set $\{P_{[\alpha]_M}^{(1)}\}$ forms a complete projective measurement, with outcomes labeled by group elements $[\alpha]_M\in\Irr_M/{\sim}$. Now suppose that on two neighboring sites we obtain outcomes $[\alpha]_M$ and $[\beta]_M$ with $[\alpha]_M \circ [\beta]_M = [1]_M$. In this case, the projector
\begin{equation}
P_{[\alpha]_M}^{(1)}\otimes P_{[\beta]_M}^{(1)}
\end{equation}
decomposes via Eq.~\eqref{eq:projectorRule} into projectors onto subspaces corresponding exclusively to irreps within $[1]_M=\Irr_{M-1}$. Hence, measuring the supersite (in this case composed of two sites), will only lead to outcomes corresponding to group elements in $\Irr_{M-1}/{\sim}$; the state has no support on any other subspace.

As an illustration, consider again the case of $D_8$. Here $P_{[5]_2}^{(1)}\otimes P_{[5]_2}^{(1)}$ decomposes into $1D$ projectors, i.e., 
\begin{equation}
P_{[5]_2}^{(1)}\otimes P_{[5]_2}^{(1)}\subseteq P^{(2)}_{[1]_2},
\end{equation}
as $5\otimes 5\simeq 1 \oplus 2 \oplus 3 \oplus 4$ and $[1]_2=\{1,2,3,4\}$. Thus, after blocking two outcomes corresponding to the two-dimensional irrep $5$, the supersite has support only on one-dimensional irreps.

In general, we will begin with the level-$M$ measurement $\{P_{[\alpha]_M}^{(1)}\}$, which is complete since $\Irr_{M}$ contains all irreps. We then perform a subsequent measurement with $m=M-1$ on supersites formed in the first round. Although the set $\{P^{(n)}_{[\alpha]_{M-1}}\}$ must be supplemented by an additional orthogonal projector to form a complete measurement, the state produced in the first round has no support on the corresponding subspace. Hence this additional outcome can never occur, and all measurement results lie in $\Irr_{M-1}/{\sim}$.
Suppose then that all measurements in the first round can be grouped so that their product equals the identity element $[1]_M$. Then the above argument applies to each such supersite. Therefore, in the second round we obtain exclusively measurement outcomes in $\Irr_{M-1}/\sim$. 

Repeating this procedure iteratively, and using that the hierarchy in Eq.~\eqref{eq:Hierarchy} has exactly $M$ levels, we find that after $M-1$ rounds the supersites have support only on irreps in $\Irr_1$. At this point, we can perform a measurement composed of rank-1 projectors which deterministically produces a product state in the trivial phase. In Fig.~\ref{fig:SPTProtocol}, we give an example of such a protocol using the non-abelian group $D_{16}$, which has nilpotency class three ($M=3$). The only remaining part is to show that such a protocol uses only $O(\polylog N)$ circuit depth.

To see that this procedure can be carried out consistently at every level of the hierarchy, we first note that the overall measurement outcomes in each round necessarily multiply to the identity element $[1]_m$. This follows from the fact that we measure a pure symmetric state (the initial state and the state after the $m$--th round). Such a state lies entirely in the subspace corresponding to a 1D irrep $\alpha\in\Irr_1$, and since $\Irr_1\subseteq [1]_m$ for all $m\in\{2,\dots,M\}$, the product of all measurement outcomes must equal the identity: $\circ_i[\alpha_i]_m=[1]_m$. 

This global constraint ensures that the measurement outcomes can be partitioned into subsets whose product is the identity. It then remains to show that those partitions are sized $O(\log n)$ with high probability, and that the corresponding reordering of sites can be implemented by a circuit of depth $O(\polylog N)$. As we will show in the next section, one can construct transformations in such a way that the probability of encountering cases requiring permutation circuits of depth exceeding $O(\polylog N)$ vanishes in the thermodynamic limit $N \to \infty$.

A protocol of the above form yields an asymptotically deterministic trivialization of all SPT phases: after $M-1$ measurement rounds, the state is supported only on one-dimensional irreps, allowing a final symmetric rank-1 measurement that produces a separable state in the trivial phase. This also implies the reverse transformation, as two copies of an SPT phase belong to the trivial phase~\cite{Schuch2011_Phases2}. Moreover, we will show that similar---though slightly more involved---protocols also lead to a trivialization of the non-normal (GHZ-type) phases.

\section{Finite nilpotent groups and their representations}\label{sec:NilpotentGroups}
As a preparation for the following discussion, we need to recall some known facts about nilpotent groups and their (irreducible) representations~\cite{Isaacs2006_CharacterTheoryFiniteGroups,DummitFoot2003_AbstractAlgebraTextbook}. Let $G$ be a finite group. The group commutator is defined by $[g,h]=g^{-1}h^{-1}gh$ for all $g,h\in G$. The commutator subgroup, denoted by $[G,G]$, is the subgroup generated by all elements of the form $[g,h]$, with $g,h\in G$. The \emph{lower central series},
\begin{equation}
    G\equiv G_0\unrhd G_1 \unrhd G_2 \unrhd \cdots,
\end{equation}
where $\unrhd$ indicates a normal subgroup, is a descending series of subgroups that is obtained by repeatedly taking commutator subgroups of $G$, i.e., $G_0\equiv G$ and $G_{m+1}=[G_m,G]$ for $m\geq 0$. The crucial property of nilpotent groups is that there exists an $M\in\mathbb{N}$ at which the lower central series terminates in the trivial group. That is,
\begin{equation}
    G=G_0\unrhd G_1 \unrhd \dots \unrhd G_M=\{e\}.
\end{equation}
A nilpotent group is \emph{class-$M$ nilpotent} if $M$ is the smallest such integer. Abelian groups are class-one nilpotent.

Using the lower central series of a class-$M$ nilpotent group $G$, we define the following sequence of subsets of the set of irreducible representations of $G$:
\begin{equation}\label{eq:HierachyIrr}
    \{1\}\equiv\text{Irr}_0 \subseteq \text{Irr}_1 \subseteq \dots \subseteq \text{Irr}_M\equiv\Irr,
\end{equation}
where $\Irr_m$ is the set of all irreps of $G$ for which the representation matrix of any element in $G_m$ equals the identity, i.e.,
\begin{equation}
    \text{Irr}_m \equiv \{ \alpha \in \text{Irr} : U_{g}^\alpha = \id_{d^\alpha}, \forall g\in G_m  \}
    \label{eq:Irra},
\end{equation}
with $m\in\{0,\dots,M\}$ and $d^\alpha$ the dimension of the irrep. For instance, we have that $\Irr_0$ contains only the trivial representation, $\Irr_1$ contains all 1D irreps of $G$, and $\Irr_M$ contains all irreps of $G$. Note that by Eq.~\eqref{eq:Irra}, if $\alpha,\beta\in\Irr_m$ then both are trivial on $G_m$, and hence the tensor product $\alpha\otimes\beta$ is also trivial on $G_m$. Therefore every irrep appearing in the decomposition of $\alpha\otimes\beta$ also lies in $\Irr_m$.

We now define an equivalence relation on each of the sets $\Irr_m$. It is based on the properties of tensor product of irreps of nilpotent groups (see also Ref.~\cite{Gelaki2008_NilpotentFusionCategories}). Two irreps $\alpha, \beta\in\Irr_m$ are said to be equivalent if there exists an irrep $\gamma\in \Irr_{m-1}$ (one step below in the hierarchy), such that $\alpha$ appears in the decomposition of $\gamma\otimes\beta$. That is,
\begin{equation}
    \beta \sim_m \alpha \text{ if } \exists \gamma\in \Irr_{m-1} \text{ : } \alpha\in \gamma\otimes \beta, \label{eq:EquivalenceRelation@Levelm}
\end{equation}
where $\alpha\in\gamma\otimes\beta$ means that $U^\alpha_g$ occurs as a summand in the decomposition of the representation $U^\gamma_g\otimes U^\beta_g$. It is verified in Appendix~\ref{App:IrrepGroups} that this is indeed an equivalence relation. We denote the equivalence class of $\alpha$ under $\sim_m$ by $[\alpha]_m$, and we drop the subscript $m$ whenever it is clear from context.

Let us also introduce a group multiplication, $\circ$, on $\Irr_m/{\sim}$ by defining $[\alpha]\circ[\beta]$ as those irreps that appear in the decomposition into irreps of $\alpha'\otimes\beta'$  for all $\alpha'\sim\alpha$ and $\beta'\sim\beta$. That is, 
\begin{equation}\label{eq:group_mult}
    [\alpha] \circ [\beta] = \{\delta \in \Irr_m : \exists \alpha'\in[\alpha], \beta'\in[\beta] : \delta\in \alpha'\otimes \beta' \}.
\end{equation}
In Appendix~\ref{App:IrrepGroups}, we show that $(\Irr_m/{\sim},\circ)$ does indeed form a group, and moreover, that this group is isomorphic to the abelian quotient group 
\begin{equation}\label{eq:groupIso}
    G_{m-1}/G_m \simeq \Irr_m/{\sim}.
\end{equation}
Furthermore, in Appendix~\ref{app:Example}, we provide an example of a nilpotent group and explain its properties regarding the structures discussed above.

The identity element of $\Irr_m/{\sim}$ is the class $[1]_m$, consisting precisely of the irreps at the level below in the hierarchy; i.e., $[1]_m=\Irr_{m-1}$. This follows directly from the definition of $\sim_m$, since tensoring with any irrep in $\Irr_{m-1}$ leaves the equivalence class unchanged. Moreover, for any $\alpha \in \Irr_m$, its conjugate representation $\alpha^*$ belongs to the group inverse in $(\Irr_m/\sim,\circ)$, i.e., $\alpha^*\in[\alpha]^{-1}$. Indeed, the tensor product of $\alpha$ and $\alpha^*$ decomposes into irreps belonging to $\Irr_{m-1}$ (see Appendix~\ref{App:IrrepGroups}).

The group structure on the equivalence classes defined in Eq.~\eqref{eq:group_mult} 
naturally appears in the tensor product of two irreps at level~$m$. Specifically, one finds that
\begin{equation}\label{eq:TensorProductsDecomposesIntoOneEquivClass}
    U^\alpha_g \otimes U^\beta_g=\bigoplus_{\gamma\in \Gamma\subseteq[\alpha]\circ[\beta]} U_g^\gamma,
\end{equation}
so every irrep $\gamma$ appearing in the decomposition of $\alpha \otimes \beta$ lies in the equivalence class $[\alpha] \circ [\beta]$ (although not all such irreps necessarily appear). This property plays a central role in our protocols, as it ensures that combining measurement outcomes produces supersites with support solely within a single equivalence class, enabling the hierarchical descent used in the subsequent sections.

To this end, we will repeatedly use projective measurements whose projectors act on subspaces corresponding to the equivalence classes in $\Irr_m$. We write $\{P_{[\alpha]}^{(n)}\}_{[\alpha]\in \Irr_m}$ for the projection onto the subspaces of $U_g^{\otimes n}$ corresponding to irreps in $[\alpha]$; that is (see Ref.~\cite{Serre1977}, Theorem 8),
\begin{equation}\label{eq:ProjectorOntoIrrepClasses}
    P_{[\alpha]}^{(n)}=\sum_{\alpha\in [\alpha]}P_\alpha^{(n)}=  \sum_{\alpha\in [\alpha]} \sum_{g\in G} \frac{d^{\alpha} \bar{\chi}^{\alpha}_g}{|G|} U_g^{\otimes n}, 
\end{equation}
with $P_\alpha^{(n)}$ the projector onto the subspace corresponding to $\alpha\in\Irr$ in $U_g^{\otimes n}$. Note that due to Eq.~\eqref{eq:TensorProductsDecomposesIntoOneEquivClass}, we have that
\begin{equation}
    P_{[\alpha]}^{(n)}\otimes P_{[\beta]}^{(n')} \subseteq P_{[\alpha] \circ [\beta]}^{(n + n')},
\end{equation}
where $\subseteq$ indicates that the support on the left-hand side is contained in the support of the right-hand side. This identity will be used repeatedly throughout this work. In particular, it shows that if we measure $n$ sites and obtain outcomes $[\alpha_i]\in \Irr_m/{\sim}$ whose product satisfies
\begin{equation}
\circ_i [\alpha_i]=[1]\in\Irr_m/{\sim} \ = \Irr_{m-1},
\end{equation}
then the combined $n$ sites can be treated as a single supersite with their support entirely contained in $\Irr_{m-1}$.

We may then proceed by measuring $\{P_{[\beta]}^{(n)}\}_{[\beta]\in \Irr_{m-1}/{\sim}}$. Although this measurement must be supplemented by an additional orthogonal projector to be complete, the state obtained from the preceding measurement has no support on the corresponding subspace, so that outcome can never occur. As a result, in each round $m$ we deterministically project the state onto subspaces corresponding to irreps in $\Irr_{m-1}$, descending one level in the hierarchy at each step. Iterating this argument, we eventually reach measurements supported on $\Irr_1$, the level of one-dimensional irreps (the abelian subspaces), which is the final stage of the protocol.

In the following, we will evaluate equations that involve Eq.~\eqref{eq:ProjectorOntoIrrepClasses}. The following technical lemma simplifies those calculations by providing a more compact expression for the above projectors. The proof of this lemma can be found in Appendix~\ref{app:SumProof}.
\begin{lem}\label{lem:IrrepSumTrick}
Let $\Irr_m/{\sim}\cong G_{m-1}/G_m=\{[\alpha]\}$. Then, we have the following:
\begin{equation}
P^{(n)}_{[\alpha]}= \sum_{g\in G_{m-1}} \frac{e^{-i\varphi^{[\alpha]}([g])}}{\abs{G_{m-1}}} U_g^{\otimes n},
\end{equation}
where $[g]$ denotes the canonical projection map from $G_{m-1}$ to $G_{m-1}/G_m$, and  $e^{-i\varphi^{[\alpha]}}\in \Irr( G_{m-1}/G_m)$ is a 1D irrep of $G_{m-1}/G_m$ defined in Appendix~\ref{app:SumProof}.
\end{lem}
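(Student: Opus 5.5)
The plan is to show that both sides of the identity act diagonally on the isotypic decomposition of $U_g^{\otimes n}$, and that they agree irrep by irrep. Write $U_g^{\otimes n}\cong\bigoplus_{\beta}U^\beta_g\otimes\id_{m_\beta}$. The operator $Q:=\frac{1}{\abs{G_{m-1}}}\sum_{g\in G_{m-1}}e^{-i\varphi^{[\alpha]}([g])}U_g^{\otimes n}$ then acts on the $\beta$-isotypic block as $Q^\beta\otimes\id_{m_\beta}$, where $Q^\beta=\frac{1}{\abs{G_{m-1}}}\sum_{g\in G_{m-1}}e^{-i\varphi^{[\alpha]}([g])}U^\beta_g$. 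It therefore suffices to prove that $Q^\beta=\id$ if $\beta\in[\alpha]$ and $Q^\beta=0$ otherwise, for every $\beta\in\Irr_m$.

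\textbf{Step 1: every $\beta\in\Irr_m$ restricted to $G_{m-1}$ is a scalar character.} For $h\in G_{m-1}$ and $g\in G$, the commutator $[h,g]$ lies in $[G_{m-1},G]=G_m$, so $U^\beta_{[h,g]}=\id$. Hence $U^\beta_h$ commutes with every $U^\beta_g$. By Schur's lemma, $U^\beta_h=\lambda_\beta(h)\id$ with $\lambda_\beta:G_{m-1}\to U(1)$ a homomorphism. Since $\lambda_\beta$ is trivial on $G_m$, it factors through $G_{m-1}/G_m$ and is thus a 1D irrep of this abelian quotient.

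\textbf{Step 2: the equivalence class is exactly the level set of $\lambda_\beta$.} For $\gamma\in\Irr_{m-1}$, $\gamma$ is trivial on $G_{m-1}$. Every constituent $\delta$ of $\gamma\otimes\beta$ therefore satisfies $\lambda_\delta=\lambda_\beta$, so $\beta\sim_m\beta'$ implies $\lambda_\beta=\lambda_{\beta'}$. For the converse I would use the isomorphism $\Irr_m/{\sim}\cong G_{m-1}/G_m$ of Eq.~\eqref{eq:groupIso}, established in Appendix~\ref{App:IrrepGroups}. I expect that isomorphism to be realized precisely by $[\beta]\mapsto\lambda_\beta$, read through Pontryagin duality $\widehat{G_{m-1}/G_m}\cong G_{m-1}/G_m$. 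Alternatively, one can argue directly: if $\lambda_\beta=\lambda_{\beta'}$, then $\beta'\otimes\beta^*$ is trivial on $G_{m-1}$, so all its constituents lie in $\Irr_{m-1}$. Since $\beta'\subseteq(\beta'\otimes\beta^*)\otimes\beta$, some constituent $\gamma\in\Irr_{m-1}$ has $\beta'\in\gamma\otimes\beta$. With this, I define $e^{i\varphi^{[\alpha]}}:=\lambda_\alpha$, which is well defined on the class.

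\textbf{Step 3: character orthogonality.} For $\beta\in\Irr_m$ we get
\begin{equation*}
Q^\beta=\Bigl(\frac{1}{\abs{G_{m-1}}}\sum_{g\in G_{m-1}}\overline{\lambda_\alpha(g)}\lambda_\beta(g)\Bigr)\id .
\end{equation*}
Both characters are constant on $G_m$-cosets, so the sum equals the normalized sum over $G_{m-1}/G_m$. By orthogonality of characters of this abelian group, it equals $\delta_{\lambda_\alpha,\lambda_\beta}$. By Step 2 this is $1$ exactly when $\beta\in[\alpha]$, which gives the claim block by block.

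\textbf{Main obstacle.} The delicate point is irreps $\beta\notin\Irr_m$ that occur in $U_g^{\otimes n}$. On those, $U^\beta$ restricted to $G_{m-1}$ need not be scalar, and $Q^\beta$ is merely the projector onto the $\lambda_\alpha$-isotypic part of $\beta|_{G_{m-1}}$, which may be nonzero. I would therefore state and prove the identity on the subspace supported on $\Irr_m$. This is the setting in which the lemma is used: after the previous round of the protocol, the state has no support outside $\Irr_m$, and at $m=M$ this set is all irreps. Alternatively, one can take the right-hand side as the definition of the extended projector, so that the family $\{P^{(n)}_{[\alpha]}\}$ is complete with the extra outcome absorbed. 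Everything else in the argument is routine; the correct identification in Step 2 of the class with its $G_{m-1}$-character is the substantive input.
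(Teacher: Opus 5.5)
\textbf{Genuine gap.} Your Steps 1--3 are sound for isotypic blocks with $\beta\in\Irr_m$. However, the lemma is an operator identity on all of $\mathcal{H}^{\otimes n}$: its left-hand side is the sum of the isotypic projectors of Eq.~\eqref{eq:ProjectorOntoIrrepClasses}. You do not prove it on that whole space. Your ``main obstacle'' asserts that $Q^\beta$ may be nonzero for $\beta\notin\Irr_m$. That claim is false, and the step that refutes it is short.

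As you note, $Q^\beta$ is the projector onto $V_\beta=\{v:\,U^\beta_h v=\lambda_\alpha(h)v\ \forall h\in G_{m-1}\}$. For $g\in G$ and $h\in G_{m-1}$ we have $g^{-1}hg=h[h,g]$ with $[h,g]\in[G_{m-1},G]=G_m\subseteq\ker\lambda_\alpha$. Hence $\lambda_\alpha$ is invariant under conjugation by $G$. For $v\in V_\beta$ this gives $U^\beta_hU^\beta_g v=U^\beta_gU^\beta_{g^{-1}hg}v=\lambda_\alpha(h)U^\beta_g v$. So $V_\beta$ is a $G$-subrepresentation of the irrep $\beta$, and it is either $0$ or everything. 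In the latter case $U^\beta_h=\lambda_\alpha(h)\id$ on $G_{m-1}$. This forces $\beta$ to be trivial on $G_m$, i.e.\ $\beta\in\Irr_m$ with $\lambda_\beta=\lambda_\alpha$, i.e.\ $\beta\in[\alpha]$ by your Step 2. Therefore $Q^\beta=0$ for every $\beta\notin[\alpha]$, including all $\beta\notin\Irr_m$, and the identity holds on the full space with no restriction.

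This is exactly the fact the paper uses implicitly in Eq.~\eqref{eq:InducedRepOnNilpotentSubgroup}. There it guarantees that every constituent of $\mathrm{Ind}^G_{G_{m-1}}e^{i\tilde{\phi}}$ lies in $[\alpha]$. That is why the class-function identity $\sum_{\alpha'\in[\alpha]}d^{\alpha'}\bar{\chi}^{\alpha'}_g/\abs{G}=e^{-i\varphi^{[\alpha]}([g])}\delta_{g\in G_{m-1}}/\abs{G_{m-1}}$ holds for all $g\in G$. By Frobenius reciprocity, that statement is equivalent to your missing step.

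Your fallback fails for the same reason. Summing your $Q$ over all classes gives $\frac{1}{\abs{G_m}}\sum_{g\in G_m}U_g^{\otimes n}$, which is the projector onto the $\Irr_m$-supported subspace, not the identity. So taking the right-hand side as a definition does not produce a complete family; it just reproduces the true $P^{(n)}_{[\alpha]}$.

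Once the invariance argument is added, your block-by-block route is complete and genuinely different from the paper's. The paper computes the character of the induced representation in two ways and counts multiplicities $d^{\alpha'}$. You use only Schur's lemma, conjugation-invariance of $\lambda_\alpha$, and orthogonality of characters of the abelian quotient $G_{m-1}/G_m$. You also obtain the identification of $\sim_m$ with equality of $\lambda_\beta$ along the way, independently of Appendix~\ref{App:IrrepGroups}. The paper's route yields, as a by-product, the explicit decomposition $\mathrm{Ind}^G_{G_{m-1}}e^{i\tilde{\phi}}\cong\bigoplus_{\alpha'\in[\alpha]}(\alpha')^{\oplus d^{\alpha'}}$.
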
 

\section{Trivialization of symmetry-protected phases}\label{sec:ProtocolSPT}
In this section, we show that all SPT phases of MPS with respect to a finite nilpotent group $G$ trivialize under $G$-CMF transformations. We describe a protocol that transforms an SPT state into a product state in the trivial phase while respecting the symmetry described by a finite nilpotent group. The protocol proceeds in $M$ rounds of measurements interleaved with low-depth circuits. The reverse transformation then follows immediately as two copies of an SPT state belong to the trivial phase~\cite{Schuch2011_Phases2}; transforming one copy to the trivial phase leaves an SPT state on the second.

We begin by choosing a representative state for the SPT phase. We then explain the details of our protocol and demonstrate that it successfully achieves the desired transformation. Then, we analyze the resources required for its implementation, including its circuit depth and probability of success.

\begin{figure*}
    \centering
    \includegraphics[width=1\linewidth]{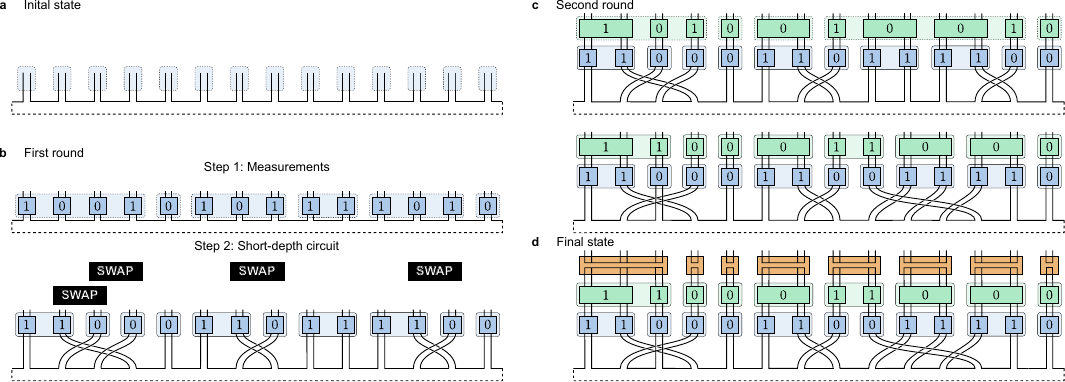}
    \caption{An example of the SPT protocol for the non-abelian class-3 nilpotent group $D_{16}$, transforming an SPT state to the trivial phase.
    (a) We depict the initial non-trivial state, $\ket{\rm SPT}$, corresponding to a chain of Bell pairs between nearest neighbors. This state is symmetric under $U_g^{\otimes N}$, with the effective symmetry defined in Eq.~\eqref{eq:Ug}.
    (b) In the first round of the protocol, we project onto the equivalence classes in $\Irr_3/{\sim}$, with projectors represented by blue boxes. Due to the isomorphism in Eq.~\eqref{eq:groupIso} we have that $\Irr_3/{\sim} \cong G_2/G_3\cong \mathbb{Z}_2$. Therefore, measurement outcomes can be labeled by elements of $\mathbb{Z}_2$. Note, that the total product of outcomes is equal to $0$, the identity element. Next, we partition the output into minimal and complete connected substrings. Here, the longest minimal connected substring that multiplies to the identity is of length $4$. Therefore, we need $L^{(1)}\ge 16$ to ensure the protocol does not fail in this step. In a second step we use a SWAP circuit to permute these minimal connected substrings to get complete substrings of length at most $|G_2|/|G_3|=2$. This is to ensure that subsequent measurements act on only $O(1)$ sites. By inspection, we see that the second round will measure $N^{(2)}=10$ supersites. 
    (c) In the second round of the protocol, we then project onto the equivalence classes in $\Irr_{2}/{\sim}$, with projectors represented by green boxes. Outcomes are labeled by $\Irr_{2}/{\sim}\cong G_1/G_2 \cong \mathds{Z}_2$. As before, we use a circuit to permute the outcome. Then, substrings have support on subspace of 1D irreps. Finally, in (d), we project onto 1D subspaces, represented by orange boxes. As these measurements are rank-1, the resulting state is a symmetric pure product state with local sites of size $O(|G|)$.
    }
    \label{fig:SPTProtocol}
\end{figure*}

\subsection{Representative for the phase}
Let $G$ be a class-$M$ nilpotent group with a nontrivial second cohomology group $H^2(G,U(1))$. Associated with each $\mu\in H^2(G,U(1))$ is a cocycle $\gamma^\mu:G\times G\rightarrow U(1)$, and the so-called \emph{projective $\mu$-regular representation} is given by~\cite{karpilovsky1994_ProjTraceCondition}
\begin{equation}
    \omega^\mu_g=\sum_{h\in G} \gamma^\mu_{g,h}\ketbra{g\circ h}{h}.
\label{eq:ProjectiveRegularRep}
\end{equation}
As before, $\{\ket{h}\}_{h\in G}$ denotes some orthonormal basis and $\circ$ the group multiplication in $G$. This generalizes the linear regular representation of a group $G$, and if $\mu$ is trivial the projective $\mu$-regular representation reduces to the linear regular representation of $G$. This representation is reminiscent of the generalized Pauli matrices, sometimes also referred to as clock- and shift matrices. Indeed, they fulfill similar properties as the generalized Paulis. For instance, they are mutually orthogonal under the Hilbert-Schmidt inner product, i.e.,
\begin{equation}\label{eq:ProjectiveIrrepsOrthogonal}
    \mathrm{tr}(\omega^{\mu\dagger}_g\omega^\mu_h)=\abs{G}\delta_{g,h},
\end{equation}
for all $g,h\in G$, and, as projective representations, they are closed under multiplication up to phases, i.e., $\omega^\mu_g\omega^\mu_h=\gamma_{gh}^\mu\omega^\mu_{gh}$.

As we consider the on-site symmetry to be the regular representation, for any unitary representation there is a finite blocking length $k$ at which that representation appears a subrepresentation of $(U_g^{\rm reg})^{\otimes k}$ (see Section \ref{sec:prelim}). This allows us to freely choose a representative state for the phase as follows: Fix a non-trivial $\mu$ and define the effective on-site symmetry
\begin{equation}\label{eq:Ug}
    U_g=(\omega_g^{\mu})^*\otimes \omega_g^\mu.
\end{equation}
Then, a representative for the SPT phase can be chosen as~\cite{Chen2011_Phases1,Schuch2011_Phases2}
\begin{equation}\label{eq:sptstate}
    \ket{\rm SPT} = \bigotimes_{i=1}^N \ket*{\Phi^+_{\abs{G}}}_{i,i+1},
\end{equation}
which is a chain of maximally-entangled Bell states between nearest neighbors, see Fig.~\ref{fig:SPTProtocol}(a). This will be our representative state for the SPT phases. Next, we will show how to transform this state to a product state in the trivial phase.

\subsection{Protocol}
Having fixed representatives of the phases, we now present a protocol that maps the state $\ket{\rm SPT}$ in Eq.~\eqref{eq:sptstate} to a separable state in the trivial phase. We will explain the first round in more detail, and briefly comment on subsequent rounds, which then proceed similarly. In Fig.~\ref{fig:SPTProtocol} we provide an example illustrating the following protocol.

\subsubsection{First round}
On each site, we perform the projective measurement $\{P_{[\alpha]}^{(1)}\}$, where $P_{[\alpha]}^{(1)}$, as defined in Eq.~\eqref{eq:ProjectorOntoIrrepClasses}, projects onto the subspace corresponding to those irreps in $U_g$ that belong to $[\alpha]\in \Irr_M/{\sim}$. We can label the measurement outcome by a vector
\begin{equation}
    \vec{x}^{(1)} = (x^{(1)}_{1},x^{(1)}_{2},\dots, x^{(1)}_{N}),
\end{equation}
where $x^{(1)}_i\in \Irr_M/{\sim}$, and $N\equiv N^{(1)}$ is the number of sites with the upper index referring to the measurement round.

On the level of the entire system of $N$ sites, the outcomes are correlated. The only outcomes to occur are those where $x_1^{(1)}\circ\dots\circ x_N^{(1)}=[1]\in \Irr_{M}/\sim$. As mentioned before, the reason for that is that the global state is pure and symmetric and thus must belong to the abelian subspace, i.e., the subspace of $U_g^{\otimes N}$ corresponding to irreps in $\Irr_1$.
As $\Irr_1\subseteq [1]\in \Irr_{M}/\sim$ we have that~\cite{note4}
\begin{multline}
    x_1^{(1)}\circ\dots\circ x_N^{(1)}\neq [1] \\
    \Rightarrow \bra{\rm SPT} P_{x^{(1)}_1}\otimes\dots\otimes P_{x^{(1)}_N}\ket{\rm SPT}=0. \label{eq:FirstSPTParityConstraint}
\end{multline}
After the first round of measurements, we can partition the outcome vector $\vec{x}^{(1)}$ into the shortest connected substrings that individually multiply to the identity element, see Fig.~\ref{fig:SPTProtocol}(b) for an example. In what follows, we refer to such substrings—whose product equals the identity—as 
$m$-complete substrings, or simply complete when $m$ is clear from the context.

If the longest of such substrings is greater than $L^{(1)}$, where $L^{(1)}$ is some yet to be specified function of $N^{(1)}=N$, the protocol terminates and fails. Later we will show that if we choose $L^{(1)}=O(\log N)$ with a sufficiently large constant prefactor, then the probability that there is a substring longer than $L^{(1)}$ vanishes in the thermodynamic limit.

Assuming that the longest of such substrings is not greater than $L^{(1)}$, we use a SWAP circuit to permute each substring in such a way that one obtains \emph{minimal} connected substrings that are complete and are of length at most $\abs{G_{M-1}}/\abs{G_M}$ (see Fig.~\ref{fig:SPTProtocol}(c) for an example). This follows from the fact that $\abs{G_{M-1}}/\abs{G_M}$ is finite. By doing so, we have transformed our vector of outcomes $\vec{x}^{(1)}\mapsto \tilde{\vec{x}}^{(1)}$, where the latter can be partitioned as 
\begin{align}
    \tilde{\vec{x}}^{(1)}=(\tilde{\vec{x}}_i^{(1)})_{i=1}^{N^{(2)}},
\end{align}
where $N^{(2)}=N^{(2)}(\vec{x}^{(1)})$ is the number of such substrings, which depends on $\vec{x}^{(1)}$.
Since all substrings $\tilde{\vec{x}}^{(1)}_i$ are $M$-complete, the corresponding sites are supported only on subspaces corresponding to irreps in $[1]_M=\Irr_{M-1}$.
Consequently, in the next round we will project onto subspaces corresponding to irreps in $\Irr_{M-1}$.

\subsubsection{Second round}
In the second round, we analyze the system at the level of the substrings defined in the first round. For each site corresponding to a substring, we perform the measurement $\{P^{(N_{i}^{(1)})}_{[\beta]}\}$ where $N_{i}^{(1)}\le|G_{m-1}|/|G_m|$ denotes the length of the $i^{th}$ substring from the first round, and $[\beta]\in\Irr_{M-1}/\sim$. Sinve the projective measurements across different rounds commute, we again obtain a parity constraint, $[\beta_1]\circ \dots \circ [\beta_{N^{(2)}}] =[1]\in\Irr_{M-1}$. We then proceed in the same way as in the first round.

\subsubsection{Subsequent rounds}
After performing $M-1$ rounds of measurements and the corresponding circuits, all sites are locally supported on irreps in $\Irr_{1}/{\sim}=\Irr_1$ which contains all 1D irreps. Consequently, the state is locally supported on a subspace that decomposes entirely into 1D irreps. This allows us, in a final measurement, to disentangle the state and apply local corrections to obtain a translation-invariant state in the trivial phase.

\subsection{Resources} \label{Sec:SPTResources}
Now let us examine the resources required for the protocol. A round begins with a measurement and ends after any subsequent circuits. In round $m\in\{1,\dots ,M\}$, we project onto the different subspaces corresponding to equivalence classes of irreps in $\Irr_{M-(m-1)}/{\sim} \ \cong G_{M-m}/G_{M-(m-1)}$. After permuting outcomes and forming minimal connected substrings that are complete, each supersite consists of at most $\abs{G_{M-m}}/\abs{G_{M-(m-1)}}$ supersites from the previous round. Thus, after $k$ rounds, the supersites consist of at most $\prod_{m=1}^k\abs{G_{M-m}}/\abs{G_{M-(m-1)}}=\abs{G_{M-k}}/\abs{G_{M}}=\abs{G_{M-k}}$ original sites, where the last equality follows from $G_M=\{e\}$. Therefore, the size of the measured sites is independent of the system-size. Moreover, as the number of rounds is finite, at the end of the protocol each supersite is of size $O(\abs{G})$.

Next, let us consider the cumulative circuit depth of the protocol.
In round $m$, we perform permutations on substrings of length at most $L^{(m)}$. Recall, that if the longest minimal connected substring that is complete and has length greater than $ L^{(m)}$, the protocol terminates and fails. An arbitrary permutation of $n$ sites can be performed using a SWAP circuit of depth at most $4n$~\cite{Bereg2016_MinSwapCircuit}. Thus, the SWAP circuit in round $m$ has depth at most $O(L^{(m)}(N^{(m)}))$ at the length-scale of round $m$. We note again, that each round effectively rescales the length-scale at which we operate. Moreover, $N^{(m)}\le N$. Therefore, we may upper bound the cumulative circuit depth of the protocol with $\max_m O(|G_m| L^{m}(N))$. In the subsequent sections, we will show that choosing $L^{(m)}(n)=O(\log n)$, with a sufficiently large constant prefactor, is sufficient to ensure the probability of success converges to $1$ as $N\rightarrow \infty$. For such a choice, the cumulative circuit depth is $O(\log N)$.

\subsection{Probability distribution of measurement outcomes}  \label{sec:SPTProbabilities}

In this section, we show that the outcome distribution in each round is uniform, subject to a parity constraint. It is sufficient to analyze the second round, as the arguments for all other rounds are analogous.

The probability distribution of the outcome of the second round, $\vec{x}^{(2)}$, depends on the outcome in the first round, $\vec{x}^{(1)}$. Indeed, the number of supersites measured in a given round depends on the measurement outcome of the preceding round. Recall that, given an outcome $\vec{x}^{(1)}$ in the first round, we permute it to obtain $\tilde{\vec{x}}^{(1)}$. The string $\tilde{\vec{x}}^{(1)}$ then decomposes into $N^{(2)}$ complete substrings of length at most $|G_{M-1}|/|G_M|$.

For simplicity, consider the case where no permutation is required (cases involving permutations between measurements will be addressed at the end). In this case, we can partition $\vec{x}^{(2)}$ and $\vec{x}^{(1)}$ using the same indices. Thus, we can write
\begin{align}\label{eq:NoCircuitsNeeded}
\vec{x}^{(2)}&=(\vec{x}^{(2)}_i)_{i=1}^{N^{(2)}}, \nonumber \\
\vec{x}^{(1)}&=(\vec{x}^{(1)}_i)_{i=1}^{N^{(2)}}=((x^{(1)}_{ij})_{j=1}^{N_i^{(1)}})_{i=1}^{N^{(2)}}.
\end{align}
Note that, since the outcome in the first round satisfied a parity constraint (see Eq.~\eqref{eq:FirstSPTParityConstraint}) and required no rearrangement, 
\begin{equation}\label{eq:condition}
    \prod_{j} x_{ij}^{(1)} = [1],\quad\forall i.
\end{equation}
Then, we can express the conditional probability of observing $\vec{x}^{(2)}$ in the second round, given that we have observed $\vec{x}^{(1)}$ in the first round, as 
\begin{align}
p(\vec{x}^2 | \vec{x}^1) &=
\frac{ \tr \left( \begin{aligned}
\includegraphics[width=0.45\linewidth]{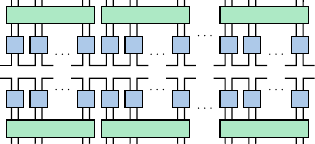}
\end{aligned} \right) }{\tr \left(\begin{aligned}
\includegraphics[width=0.45\linewidth]{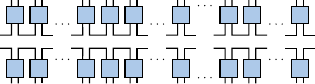}
\end{aligned} \right)}\nonumber\\
&=\quad \  \frac{
\begin{aligned}
\includegraphics[width=0.45\linewidth]{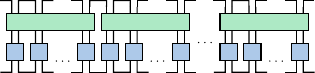}
\end{aligned}
}
{
\begin{aligned}
\includegraphics[width=0.45\linewidth]{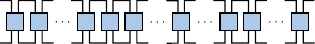}
\end{aligned}} 
\label{eq:SPTEq1}
\end{align}
where the blue boxes correspond to projectors onto subspaces corresponding to equivalence classes in $\Irr_{M-1}/\sim$ and the green boxes correspond to projectors onto subspaces corresponding to equivalence classes $\Irr_{M-2}$ (see Fig.~\ref{fig:SPTProtocol} for comparison). The second equality follows from the cyclic property of the trace and the fact that 
\begin{equation}
    [P_{x^{(2)}_i},\bigotimes_j P_{x_{ij}^{(1)}}]=0, \quad\forall i
\end{equation}
due to Eq.~\eqref{eq:condition}. To evaluate the conditional probability, consider the first term in the numerator
\begin{align}
  & \begin{aligned}
         \includegraphics[width=0.16\linewidth]{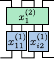}
 \end{aligned}
 \label{eq:SPTLoop1}
\end{align}
Using Lemma \ref{lem:IrrepSumTrick} and the fact that $U_g=\omega_g^*\otimes\omega_g$, this evaluates to
\begin{align}
 \sum_{\substack{b_1\in G_{M-2}\\ a_{11},a_{12} \in G_{M-1}}} 
  & \frac{e^{-i\phi^{x^{(2)}_i}([b_1])}}{|G_{M-2}|} \frac{e^{-i\phi^{\tilde{x}^{(1)}_{ij}}([a_{i1}])}}{|G_{M-1}|}
 \frac{e^{-i\phi^{\tilde{x}^{(1)}_{ij}}([a_{i1}])}}{|G_{M-1}|}\nonumber\\
 & \times\ \  \begin{aligned}
    \includegraphics[width=0.4\linewidth]{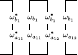}
 \end{aligned}
\end{align}
Consider the loop in this equation. Each factor $\omega_{b_i}\omega_{a_{ij}}$ appears together with its conjugate in $U_g$, so the phases arising from the multiplication of projective representations cancel. Moreover, the projective $\mu$-regular representations are orthogonal by Eq.~\eqref{eq:ProjectiveIrrepsOrthogonal}. Consequently, these loops yield delta functions of the form
\begin{align}
|G| \delta\left(b_{i}a_{ij}= b_{(ij\oplus1)_1}a_{ij\oplus 1}\right),
 \label{eq:SPTEq2}
\end{align}
where $(ij\oplus1)$ denotes the next index in the chain.

Now we consider that for each substring of $\vec{x}^{(1)}$, we have
\begin{align}
&b_ia_{ij}=b_ia_{ij'}\,\Rightarrow\, a_{ij}=a_{i1}\,\Rightarrow \,[a_{ij}]=[a_{i1}],
\end{align}
for all $i,j$, where $[\cdot]$ denotes the canonical projection from $G_{M-1}\rightarrow G_{M-1}/G_{M}$.
Moreover, if we consider the boundaries between substrings, we get that for all $i$
\begin{align}
b_ia_{iN^{(1)}_{i}}=b_{i\oplus1} a_{i\oplus1,1}\,\Rightarrow\, [b_{i}]=[b_{1}],
\end{align}
where $[\cdot]$ denotes the canonical projection from $G_{M-2}\rightarrow G_{M-2}/G_{M-1}$.
Thus, the numerator of Eq.~\eqref{eq:SPTEq1} becomes
\begin{align}
|G|^N \sum_{\substack{b_1\in G_{M-2}\\ a_{i1} \in G_{M-1}}} \Bigg( \prod_{i=1}^{N^{(2)}} \frac{e^{-i\phi^{x^{(2)}_i}([b_1])}}{|G_{M-2}|} \prod_{j=1}^{N^{(1)}_i} \frac{e^{-i\phi^{\tilde{x}^{(1)}_{ij}}([a_{i1}])}}{|G_{M-1}|}\Bigg). 
\end{align}
Using the fact that the phases are 1D irreps, and Eq.~\eqref{eq:condition} this evaluates to 
\begin{align}
|G|^N \left( \frac{|G_M|}{|G_{M-1}|}\right)^{N-1} \left( \frac{|G_{M-1}|}{|G_{M-2}|}\right)^{N^{(2)}-1} \delta\left(\prod_i x^{(2)}_i = [1] \right).
\end{align}
It is easily seen through the same arguments that the denominator of Eq.~\eqref{eq:SPTEq1} evaluates to the first term in the equation above. Thus
\begin{align}
p(\vec{x}^{(2)}|\vec{x}^{(1)})=\left( \frac{|G_{M-1}|}{|G_{M-2}|}\right)^{N^{(2)}-1} \delta\left(\prod_i x^{(2)}_i = [1] \right),
\end{align}
and therefore, $p(\vec{x}^{(2)}|\vec{x}^{(1)})$ is uniform with a parity constraint.

Let us consider the effect of a permutation circuit between measurement one and two. In this case, the delta function in Eq.~\eqref{eq:SPTEq2} becomes 
\begin{align}
\delta \left( b_{(ij)_1}a_{ij}= b_{(\sigma(ij\oplus 1)\oplus 1)_1}a_{\sigma(ij\oplus 1)} \right),
\end{align}
where $\sigma$ is the permutation applied to the sites. However, as $\sigma$ is a bijection, all delta functions can be rearranged to obtain the original conditions. Thus the permutations have no effect on the probability. Finally, it is clear how the above arguments extend to all other rounds. Thus, the above arguments demonstrate that the probability distribution of outputs in every round of the protocol is uniform with a parity constraint.

\subsection{Probability of success}
We now show that, when we restrict ourselves to circuits of depth $O(\log N)$, the probability of success converges to one as $N\rightarrow\infty$. Therefore, let us analyze the probability of failure in each given round. 

\subsubsection{First round} 
A necessary condition for the protocol to fail in the first round is that there exists at least one connected and complete substring $\vec{y}$ of length $L^{(1)}$ in $\vec{x}^{(1)}$, i.e.,
\begin{equation}\label{eq:L1substring}
    y_1 \circ \dots \circ y_{d'} \ne [1]\in G_{M-1}/G_M,\quad \forall d'=1,\dots,L^{(1)}.
\end{equation}
In the following, we refer to such strings as ``$L^{(1)}$ substrings". This condition is not sufficient: such a substring may appear without causing failure due to the periodic boundary condition~\cite{note5}.

As the above condition is necessary, the probability of the protocol failing in the first round is upper bounded by the probability of obtaining such a substring. Moreover, we can derive a loose upper bound on the probability of observing such a substring by summing, over all starting positions $j$, the probabilities of all strings that contain an $L^{(1)}$-substring beginning at position $j$.

Consider an $L^{(1)}$-substring beginning at position $j$. If the measurements were performed sequentially, obtaining an $L^{(1)}$-substring would correspond to each subsequent measurement not being the inverse of the product of the previous measurements (see Eq.~\eqref{eq:L1substring}). Since the probability distribution is uniform up to a global constraint, the inverse would occur in each subsequent measurement with probability $1/(|G_{M-1}|/|G_{M}|)$ (except for the last measurement, which would be fixed by the parity constraint). Therefore, the probability of an outcome containing an $L^{(1)}$-substring beginning at position $j$ is upper bounded by $\left(1-|G_M|/|G_{M-1}|\right)^{L^{(1)}}$, cf. Ref.~\cite{Schilling1990_CoinFlippingLongestRunOfHeads}. This argument applies to each starting position $j$, and thus
\begin{align}
    p_\text{fail}^{(1)} \le N \left(1-\frac{|G_{M}|}{|G_{M-1}|}\right)^{L^{(1)}}.
\end{align}
Therefore, if we choose 
\begin{equation}
    L^{(1)}=C_1 \log(N),
\end{equation}
with $C_1 = \frac{-(1+R_1)}{\log{\left(1-\frac{|G_{M}|}{|G_{M-1}|}\right)}}$ for any $R_1>0$, then 
\begin{align}
    p_\text{fail}^{(1)} \le \frac{1}{N^{R_1}}\rightarrow 0.
\end{align}

\subsubsection{Second round}
If the protocol succeeds in the first round with outcome $\vec{x}^{(1)}$, then after permuting the sites, the second-round measurement is performed on at least $N/(|G_{M-1}|/|G_M|)=N/|G_{M-1}|$ sites. The protocol then fails if the measurement outcome contains an $L^{(2)}$-substring. The failure probability can be upper bounded as in the first round. Namely, by choosing $L^{(2)}(N)=C_2 \log(N)$, with $C_2=\frac{-(1+R_2)}{\log\left(1-\frac{|G_{M-1}|}{|G_{M-2}|}\right)}$ and $R_2>0$ we have
\begin{equation}
    p_{\text{fail}|\vec{x}^{(1)}}^{(2)}\le \left(\frac{|G_{M-1}|}{N}\right)^{R_2},
\end{equation}
for all $\vec{x}^{(1)}$. Therefore, the probability of the protocol failing in the second round is also upper bounded by $\left(|G_{M-1}|/N\right)^{R_2}$, and, therefore, the probability of having failed in the first \emph{or} second round is bounded by
\begin{equation}
   p_{\text{fail}}^{(2)} \le \frac{1}{N^{R_1}} + |G_{M-1}|^{R_2} \frac{1}{N^{R_2}}  \rightarrow 0
\end{equation}
with $R_1,R_2>0$.

\subsubsection{Subsequent rounds}
The same arguments apply in all subsequent rounds. That is, if in round $m$ we allow a circuit depth of
\begin{equation}
    L^{(m)}(N) = C_m \log(N),
\end{equation}
with
\begin{equation}
    C_m = \frac{-(1+R_m)}{\log(1-\frac{|G_{M-(m-1)}|}{|G_{M-m}|})}
\end{equation}
and $R_m>0$, then, as the protocol terminates after $M$ rounds, the probability that the protocol fails converges to zero as $N\rightarrow \infty$. Combined with the resource analysis from the Section \ref{Sec:SPTResources}, this demonstrates that it is possible to convert nilpotent SPT phases to the trivial phase asymptotically deterministically with symmetry preserving CMF.

\section{Trivialization of non-normal phases} \label{sec:GHZProtocol}

In this section, we show the trivialization of the non-normal (or GHZ) phases.
To begin, the protocol used to transform the SPT state into the trivial phase also applies to the GHZ state. In this case, the probability distribution of measurement outcomes in each round remains uniform, subject to a parity constraint (see Appendix~\ref{Appendix:VerificationGHZIsOutputed}). Therefore, the same protocol applies.

Performing the reverse transformation---preparing the GHZ state from a product state---requires additional considerations. In essence, the protocol proceeds as follows (see Fig.~\ref{fig:GHZProtocolPart1} for an example, cf. Ref.~\cite{Gunn2025} for abelian symmetries): First, we locally prepare the fiducial state defined in Eq.~\eqref{eq:fiducial} of the GHZ tensor. We then wish to perform a Bell-type measurement on adjacent virtual sites to obtain a GHZ state. However, since the state is locally supported on non-abelian subspaces of $U_g^{\otimes 2}$, we first employ an approach similar to the one used in the SPT protocol to successively reduce the support to abelian subspaces. In contrast to the SPT case, the outcomes do not admit a parity constraint. Therefore, we iteratively introduce auxiliary qubits, which are measured until the required parity constraint is satisfied. Crucially, we demonstrate that $O(\log(N))$ auxiliary systems are sufficient to ensure the probability of failure converges to zero. Finally, we perform a Bell-type measurement, including appropriate local corrections, to obtain the desired GHZ state. In the following, we analyze the resource requirements, examine the probability distribution of measurement outcomes, and show that, when restricted to a cumulative circuit depth of $O(\log(N))$, the probability of success approaches one in the thermodynamic limit.

\begin{figure*}
    \centering
    \includegraphics[width=1\linewidth]{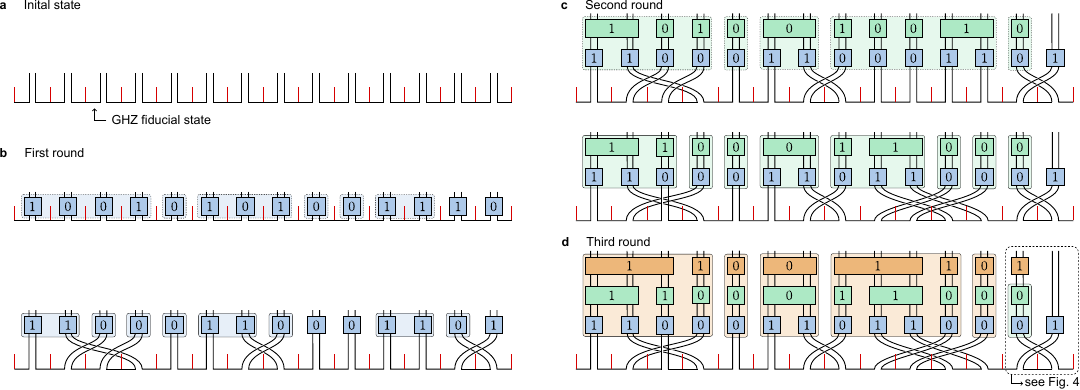}
    \caption{Example of Part 1 of the protocol transforming a state in the trivial phase to the GHZ state for the class-$4$ nilpotent group $D_{32}$---the dihedral group with thirty-two elements.
    For this group we have that the first three levels of the hierarchy are equipped with the group structure $\Irr_m\cong G_{m-1}/G_m\cong \mathbb{Z}_2$ for $m=4,3,2$, i.e., the outcomes in each round can be labeled by $0,1\in\mathbb{Z}_2$.
    The protocol begins with the state in (a)---local copies of the fiducial state of the GHZ tensor, where the middle qudits, indicated by red lines, represent physical sites that remain untouched until the end of the protocol. Clearly the state is separable between sites and therefore in the trivial phase. After using a nearest-neighbor SWAP circuit to move each right auxiliary qudit to its right neighbor, the first round begins by performing measurements similar as in the SPT case.
    Note that, unlike the SPT case, the measurement outcomes are not guaranteed to multiply to the identity.
    Therefore, in each round, we have the possibility of a ``remainder" parity. 
    For example, in (b), the product of all measurement outcomes yields $1$ instead of $0$.
    As a result, there is a remainder at the end.
    The sites are permuted so that the remainder is moved to the end of the string.
    The protocol then proceeds to the next round, (c), leaving the remainder unmeasured until Part 2 of the protocol.
    Although measurement outcomes are not guaranteed to multiply to the identity, they may do so, as in (c), in which case there is no remainder.
    This is represented by $\emptyset$ in the string $y$ which denotes the remainder parity.
    For the above sequence of outputs, $y=((1),\emptyset,(1))$. This concludes Part 1 of the protocol, for Part 2 see Fig.~\ref{fig:GHZProtocolPart2}.
    }
    \label{fig:GHZProtocolPart1}
\end{figure*}

\subsection{Representative for the phase}
In order to choose a representative of the GHZ phase, we again begin with the virtual symmetry. We require the action of the symmetry in the virtual space to be given by the linear regular-representation,
\begin{align}
    U^{\rm reg}_g = \sum_{h\in G} \ket{g\circ h}\bra{h}.
\end{align}
Thus, we take $U_g^{\rm reg}$ as the corresponding physical symmetry and select the $N$-partite $|G|$-dimensional GHZ state as the representative for the phase:
\begin{align}
    \ket{{\rm GHZ}_n} = \sum_{g\in G} \ket{gg\dots g}.
\end{align}

Before presenting the protocol, we briefly summarize some useful properties of the abelian subspace of the regular representation. Let us define the operators
\begin{equation}
    Z^\alpha=\sum_{h\in G}  \chi^\alpha_{h^{-1}} \ketbra{h}, \quad X_g = \sum_{h\in G} \ketbra{h\circ g}{h},
\end{equation}
where $\alpha \in \Irr_1$ labels the 1D irreps of $G$ and $g\in G$. These operators act analogously to generalized Pauli matrices, satisfying $[U_g,X_h]=0$ and $U_g Z^\alpha = \chi^\alpha_g Z^{\alpha} U_g $ for all $g,h\in G$, $\alpha \in \Irr_1$. We then have the following observation (see Ref.~\cite{Gunn2025} for a proof).
\begin{obs} \label{obs:BasisForAbelianSubspaceOfTensorPowersOfRegularRep}
    The vectors
    \begin{equation}
    \ket*{\phi_{\alpha g_2\dots g_n}}= (Z^\alpha \otimes X_{g_2} \otimes \dots \otimes X_{g_n}) \ket{\mathrm{GHZ}_n} \label{eq:AbelianBasisRegRep},
    \end{equation}
    with $\alpha\in\Irr_1$, and $g_2,\dots,g_n\in G$, form a symmetric orthonormal basis for the abelian subspace of $(U_g^{\rm reg})^{\otimes n}$ and 
    \begin{equation}
        (U_g^{\rm reg})^{\otimes n} \ket*{\phi_{\alpha g_2\dots g_n}} = \chi^\alpha_g \ket*{\phi_{\alpha g_2\dots g_n}}.
    \end{equation}
\end{obs}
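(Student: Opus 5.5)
The plan is to verify three things directly: the eigenvalue relation, orthonormality, and that these vectors span the abelian subspace. The abelian subspace means the sum of all isotypic components of $(U_g^{\rm reg})^{\otimes n}$ belonging to 1D irreps.

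\emph{Eigenvalue relation.} First note that $(U_g^{\rm reg})^{\otimes n}\ket{\mathrm{GHZ}_n}=\sum_h\ket{gh\dots gh}=\ket{\mathrm{GHZ}_n}$, since $h\mapsto gh$ is a bijection. Then use the stated commutation rules: $[U_g,X_h]=0$ and $U_gZ^\alpha=\chi^\alpha_g Z^\alpha U_g$. Pushing $(U_g^{\rm reg})^{\otimes n}$ through $Z^\alpha\otimes X_{g_2}\otimes\dots\otimes X_{g_n}$ gives the factor $\chi^\alpha_g$. Hence each $\ket{\phi_{\alpha g_2\dots g_n}}$ spans a 1D invariant subspace carrying $\alpha\in\Irr_1$, so it lies in the abelian subspace and is symmetric in the paper's sense.

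\emph{Orthonormality.} Write the vectors explicitly. Up to the normalisation $|G|^{-1/2}$ of the GHZ state,
\begin{equation*}
\ket{\phi_{\alpha g_2\dots g_n}}=\sum_h \chi^\alpha_{h^{-1}}\ket{h,\,hg_2^{-1},\dots,hg_n^{-1}},
\end{equation*}
with the convention $X_g\ket{h}=\ket{hg^{-1}}$. The support of this vector is the set of basis strings $(h,k_2,\dots,k_n)$ with $h^{-1}k_j=g_j^{-1}$. These supports are disjoint for distinct tuples $(g_2,\dots,g_n)$, so such vectors are orthogonal. For a fixed tuple, the overlap between the $\alpha$ and $\beta$ vectors is $|G|^{-1}\sum_h\overline{\chi^\alpha_{h^{-1}}}\chi^\beta_{h^{-1}}=\delta_{\alpha\beta}$, by orthogonality of 1D characters.

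\emph{Spanning.} This is the only step that is not mechanical, and I would handle it by a dimension count. The family has $|\Irr_1|\,|G|^{n-1}=|G/[G,G]|\,|G|^{n-1}$ orthonormal vectors. The multiplicity of a 1D irrep $\alpha$ in $(U^{\rm reg})^{\otimes n}$ is $\frac{1}{|G|}\sum_g\overline{\chi^\alpha_g}\,\chi^{\rm reg}(g)^n$. Since $\chi^{\rm reg}(g)=|G|\delta_{g,e}$, this equals $|G|^{n-1}$. Summing over $\alpha\in\Irr_1$ gives total dimension $|\Irr_1|\,|G|^{n-1}$, which matches the size of the family. An orthonormal set of that size inside the abelian subspace is therefore a basis.

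Alternatively, one can check spanning directly. The strings $(h,hg_2^{-1},\dots)$ with the tuple fixed form a regular $G$-orbit, and the 1D-irrep vectors within that orbit span exactly the $\chi^\alpha$-weighted sums. This confirms the count.
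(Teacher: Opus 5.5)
Your proof is correct and complete. The paper itself does not prove this observation; it defers to Ref.~\cite{Gunn2025}. So there is no in-text argument to compare against, and your three steps stand on their own as a proof:
\begin{itemize}
\item the eigenvalue relation, from $U_g^{\otimes n}\ket{\mathrm{GHZ}_n}=\ket{\mathrm{GHZ}_n}$ together with $[U_g,X_h]=0$ and $U_gZ^\alpha=\chi^\alpha_g Z^\alpha U_g$;
\item orthonormality, from the disjoint supports of distinct tuples $(g_2,\dots,g_n)$ plus orthogonality of 1D characters;
\item spanning, from the count that each 1D irrep has multiplicity $|G|^{n-1}$ in $(U^{\rm reg})^{\otimes n}$, which matches the $|\Irr_1|\,|G|^{n-1}$ vectors in the family.
\end{itemize}

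Two bookkeeping remarks. First, the paper defines $X_g\ket{h}=\ket{h\circ g}$, not $\ket{hg^{-1}}$. Your convention only relabels $g_j\mapsto g_j^{-1}$, so it produces the same family of vectors. Second, the paper writes $\ket{\mathrm{GHZ}_n}=\sum_g\ket{g\dots g}$ without normalization. The factor $|G|^{-1/2}$ you insert is exactly what is needed for ``orthonormal'' to hold.

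Your closing orbit remark can be made into a spanning proof that avoids the character computation. The operator $Z^\alpha\otimes\id$ is unitary, and it maps the $G$-invariant subspace onto the $\alpha$-isotypic component, with inverse $Z^{\alpha^*}\otimes\id$. The invariant subspace of this permutation representation is spanned by the orbit sums $\sum_h\ket{h,hg_2,\dots,hg_n}=(\id\otimes X_{g_2}\otimes\dots\otimes X_{g_n})\ket{\mathrm{GHZ}_n}$.
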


\subsection{Protocol}
We now present in detail the three parts of the protocol preparing the GHZ state from a product state. An example of the protocol for the class-4 nilpotent group $D_{32}$ is shown in Figs.~\ref{fig:GHZProtocolPart1} and~\ref{fig:GHZProtocolPart2}. The protocol proceeds in three parts.

\subsubsection{Part 1}
We begin by locally preparing, on each site, the fiducial state defined in Eq.~\eqref{eq:fiducial} of the GHZ tensor. We will refer to the left and right leg of the fiducial states as the auxiliary qudits and the central leg as the physical qudit. Then we use a depth-2 circuit to move each right auxiliary qudit to its right neighbor. Throughout this protocol, we will only measure the auxiliary qudits, never the physical qudits---those will later hold the target state. The goal of Part 1 of the protocol is to reduce the support on auxiliary qudits to the abelian subspace on which we can then perform a Bell-type measurement.

To achieve this, we proceed as follows: In each round $m \in \{1, \dots, M-1\}$, we project onto the subspaces corresponding to the irreps within the equivalence classes of \( \Irr_{M-(m-1)}/{\sim} \), following the approach used in the SPT case. Since the measurements act only on subsets of the qudits---rather than on a pure state---we do not encounter the same parity constraints observed in the SPT scenario (for further details, see Section~\ref{sec:GHZProbabilites}). However, we can still decompose each measurement outcome $\vec{x}^{(m)}$ into minimal and complete connected substrings, except possibly for the final substring, which need not be complete (as illustrated in Fig.~\ref{fig:GHZProtocolPart1}(b)). Moreover, the substrings that are complete can be again rearranged using a SWAP circuit, and further decomposed into substrings of length at most $|G_{M-m}|/|G_{M-(m-1)}|$. Similar to the SPT case, the protocol fails in round $m$ if the required circuit depth exceeds $L^{(m)}(N^{(m)})$. Prior to performing the Bell-type measurement, we must ensure an overall parity constraint, which is the goal of Part 2.

\subsubsection{Part 2}
If the protocol succeeds in Part 1, we have obtained a state in which almost all auxiliary qubits have been grouped into supersites of size $O(1)$ that have support solely on abelian subspaces, except for possibly $O(1)$ auxiliary sites at the end of the chain. These sites can have support on any equivalence class of irreps at any level of the hierarchy. For example, if the measurement outcomes in the first round did not multiply to the identity, then the last $O(|G_{M-1}|/|G_M|)$ pairs of auxiliaries will have support on irreps in non-identity equivalence classes in $\Irr_{M}/{\sim}$. In general, we can define a ``remainder" as the extent to which the measurement outcomes in each round fail to satisfy the parity constraint. This remainder can be represented by a tuple, $\vec{y}=(y_2,\dots,y_M)$, where each $y_k$ is either a string of elements of $\Irr_{k}/{\sim}$ such that no subset multiplies to the identity, or $y_k$ is the empty set, $\emptyset$, indicating that in round $M-(k-1)$ the measurements failed to satisfy the parity constraint. The goal of Part 2 is to transform the output of Part 1 to a state in which \emph{all} sites are grouped into $O(1)$ supersites with support on abelian subspaces, i.e., to a state with remainder $\vec{y}=(\emptyset,\emptyset,\dots,\emptyset)$.

To this end, we append pairs of auxiliary sites to the end of the chain and perform symmetric projective measurements to entangle the extra auxiliaries with the chain. These measurement operators can project the new auxiliary pair onto subspaces corresponding to any level of the hierarchy. As in Part 1 of the protocol, we can rearrange the sites at the end of chain and group them into supersites of size $O(1)$ that correspond to complete substrings at some level of the hierarchy. We can then measure these supersites again in order to reach lower levels of the hierarchy as in Part 1. One can iteratively append sites, perform measurements, and rearrange sites until $y=(\emptyset,\emptyset,\dots,\emptyset)$. We now present an explicit protocol that accomplishes this in $\log(N)$ rounds of measurements, each acting only on $O(1)$ sites, interleaved with constant-depth circuits, resulting in an overall depth of $O(\log(N))$.

Before presenting the protocol, let us define the following measurement
\begin{multline}
   \{ P^{(n),k} \}= \{P_{[\alpha]}^{(n)}\}_{[\alpha]\in\Irr_2/{\sim}} \\
   \bigcup_{m=3,\dots, k}\{P_{[\alpha]}^{(n)}\}_{[\alpha]\in(\Irr_{m}/{\sim})\setminus[1]_m}.
\end{multline}
This measurement extends the usual $m=2$ measurement, $\{P_{[\alpha]}^{(n)}\}_{\alpha\in \Irr_2/\sim}$, with projectors onto the non-identity equivalence classes of irreps at each level $m=k$. We can label the outcome of this measurement by $\vec{z}=(z_2,z_3,\dots,z_M)$, where $z$ has only one entry that is not the empty set and which is an element of $\Irr_m/{\sim}$ for some $m$.

Equipped with this measurement, let us now describe the protocol in Part 2, see Fig.~\ref{fig:GHZProtocolPart2} for an example. Consider the case where $\vec{y}=(y_2,y_3,\dots,y_M)\ne(\emptyset,\emptyset,\dots,\emptyset)$. First, we append the state $\ket{\phi_{00}}$, defined in Eq.~\eqref{eq:AbelianBasisRegRep}, to the auxiliary site $N$. Now we perform the projective measurement $\{ P^{(2),M} \}$ on the right-most physical site and the auxiliary system with outcome $\vec{z}=(z_2,z_3,\dots,z_M)$. In the example in Fig.~\ref{fig:GHZProtocolPart2}(a) this corresponds to $\vec{z}=(\emptyset,1,\emptyset)$ (green box). Let $z_k$ be the entry of $\vec{z}$ that is not the empty set and consider the $k^{th}$ entry of the remainder $\vec{y}$, i.e., $y_k=(y_{ki})_i$. If $z_k \circ \prod_i y_{ki}=[1]\in \Irr_{k}/{\sim}$, we can rearrange the free sites and measure again, this time with $\{ P^{(2(l(y_k)+1)),k-1}\}$, see Fig.~\ref{fig:GHZProtocolPart2}(b). This corresponds to updating the remainder as $y_k\mapsto \emptyset$ and produce a new output, $z'$, that has exactly one entry not equal to $\emptyset$, which is at entry strictly less than $k$. Similarly, if only a subset of $S\subseteq y_kz_k$ multiplies to the identity, apply the above procedure to that subset only and update $y\mapsto y_kz_k\setminus{S}$  accordingly. One can repeat this procedure until either no subset of $y_kz_k$ multiplies to the identity or $y_2=\emptyset$. If $\vec{y}=(\emptyset,\emptyset,\dots,\emptyset)$, Part 2 terminates; otherwise, we repeat. 

In this way, at the end of Part 2, the auxiliary qubits of the initial fiducial states can be partitioned into connected substrings of size $O(|G|)$ that only have support on $[1] \in \Irr_{2}/{\sim}$, i.e., the abelian subspace. Finally, we move the remaining physical site at the end of the chain back to site $N$.

\begin{figure}
    \centering
    \includegraphics[width=1\linewidth]{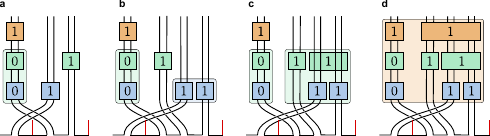}
    \caption{Example of Part 2 of the protocol, which establishes the parity constraint. At the end of Part 1 we can describe the remainder by $\vec{y}=((1),\emptyset,(1))$; see Fig.~\ref{fig:GHZProtocolPart1}(d). (a) We append auxiliary systems in the state $\ket{\phi_{00}}$ and perform the measurement $P^{(2),M}_{[\alpha]}$. The measurement outcome can belong to levels $2,3$ or $4$ of the hierarchy. Let us assume we obtained an outcome belonging to $\Irr_{2}/{\sim}$ (green box). The remainder updates to $\vec{y}=((1),(1),(1))$, where none of the $y_i$ contain a subset whose elements multiply to the identity. Therefore, in (b), we rearrange the sites and append another auxiliary system in the state $\ket{\phi_{00}}$. We measure again, obtaining the updated remainder $\vec{y} = ((1),(1),(1,1))$. Now $y_3$ multiplies to the identity, so in (c) we perform another measurement, yielding $\vec{y} = ((1),(1,1),\emptyset)$. Next, $y_2$ multiplies to the identity, and in (d) we measure once more, obtaining $\vec{y} = ((1,1),\emptyset,\emptyset)$. At this point, the total parity of the state is the identity, and we can proceed to Part~3.
}
    \label{fig:GHZProtocolPart2}
\end{figure}

\subsubsection{Part 3}
Now equipped with a state that can be decomposed into sites of size $O(|G|)$ that only have support on the abelian subspace, we project onto the orthonormal basis for said subspace given in Observation~\ref{obs:BasisForAbelianSubspaceOfTensorPowersOfRegularRep}. It can be verified, using similar arguments as in Sec.~\ref{sec:SPTProbabilities}, that any state obtained after the measurement is equivalent to the desired GHZ state up to quasi-commuting local unitaries (see Appendix~\ref{Appendix:VerificationGHZIsOutputed}). These can be corrected using local auxiliary systems~\cite{Gunn2025}.

\subsection{Resources}
Having verified that the protocol produces the desired state, let us review the resources required for the protocol. 

First, let us consider the number of auxiliary systems we need per site and the maximum number of sites we must simultaneously measure. The protocol begins with at most three qudits per site. Part 1 requires no additional auxiliaries. Part 2, requires $L^{M}$ auxiliaries. However, these are shared between $L^{M}$ sites. We will ultimately choose $L^{M}=O(\log(N))<O(N)$. Therefore, the protocol requires $O(1)$ auxiliaries per site. Additionally, every measurement implemented during the protocol acts on $O(1)$ sites.

As with the SPT case, each round in Part 1 requires $O(|G_m|L^{m})$ depth. In Part 2, we implement $L^{M}$ rounds of measurements. In each round of measurements, the number of auxiliaries that require further measurements is at most $|G_1|/|G_2|+\dots +|G_{M-1}|/|G_M|$. Therefore, between each measurement, we use $O(1)$ depth circuits. Therefore, the cumulative depth of these actions is $O(L^M)$. Finally, at the end of Part 2, we move the physical qubit back to site $N$, thereby adding another depth-$O(L^M)$ circuit. Thus, the cumulative depth of Part 2 is $O(L^M)$, and therefore the cumulative depth of the protocol is $\max_m O(\log^{(m)}(N))$.

\subsection{Probability distribution of measurement outcomes}\label{sec:GHZProbabilites}
The probability distribution for the measurements during the GHZ protocol is simpler than the SPT case. When evaluating the probability, analogous to Eq.~\eqref{eq:SPTEq2}, the unmeasured sites ensure that all $b_{ij}a_{ijk}$ terms conspire to equal identity. Consequently, one finds that the probability distribution is uniform, without any parity constraints. In particular, we have
\begin{equation}
    p(\vec{x}^{(m)}|\vec{x}^{(1)},\dots, \vec{x}^{(m-1)}) = \left(\frac{|G_{M-{m-1)}}|}{|G_{M-m}|}\right)^{N^{(m)}}.
\end{equation}
This also holds in Part 2 of the protocol as $P_{[\alpha]_{m-1}}^{(n)}=P_{[\alpha]_{m-1}}^{(n)}P_{[1]_m}^{(n)}$. For details, see Appendix \ref{Appendix:ProbabilityGHZ}.

\subsection{Probability of success}
As in the SPT case, we restrict ourselves to circuits of depth $O(\log(N))$ and show that the probability of success converges to $1$ as $N\rightarrow \infty$. As the measurement outcome in each round of Part 1 are independent and identically distributed, the first part of the argument is identical to the SPT case---the protocol fails in round $m$ if the measurement outcome contains a substring that is longer than $L^{(m)}$. As long as $L^{(m)}$ is chosen to scale as $O(\log(N))$, with a sufficiently large constant prefactor, then the probability that the protocol fails during Part 1 vanishes as $N\rightarrow\infty$. In Part 2, the sequence of measurements can be viewed  as a path on a finite Markov network, where each node corresponds to the possible value of $\vec{y}$. As the set of possible $\vec{y}$ is finite and all trajectories eventually terminate at $\emptyset$, choosing $L^{(M)}$ to grow with $N$ ensures that the failure probability converges to zero. In particular, $O(\log(N))$ is sufficient. Thus, combined with the resource analysis above, we have shown that it is possible to transform the trivial phase into the GHZ phase with asymptotically deterministic CMF for nilpotent groups.

\section{Conclusion and Discussion}

In this work, we have studied transformations of MPS under combinations of short-depth symmetric circuits and symmetric measurements, where the symmetry is with respect to a non-abelian nilpotent group. We explained that nilpotent groups have a powerful structure in their irreducible representations; namely, the lower central series of a nilpotent group can be used to define a sequence of ascending subsets of irreducible representations, $\Irr_m$, that can then be given a group structure, $(\Irr_m/{\sim},\circ)$. This group structure is central in the construction of the protocols. This means that if one performs on-site measurements on subspaces corresponding to irreps contained in elements of $\Irr_m/{\sim}$, then one can combine the sites into supersites that only have support on subspaces corresponding to irreps in $\Irr_{m-1}$. Using this, we showed that the trivial state can be asymptotically deterministically transformed into any SPT or GHZ-like state via circuits and measurements while respecting non-abelian nilpotent symmetries. Thus SPT and GHZ phases of non-abelian nilpotent groups trivialize with the inclusion of symmetric measurements.

Looking forward, it would be interesting to investigate whether phases of non-nilpotent non-abelian groups trivialize with the inclusion of symmetric measurements. This is not even clear for solvable groups - a simple class of non-abelian groups containing nilpotent groups. A direct application of our protocols to solvable groups does not work. For example, consider the non-nilpotent but solvable group $S_3$, the symmetric group on three elements. It has three irreps, $\{1,2,3\}$, of which only $3$ is non-commutative. However, $3\otimes 3\cong 1\oplus 2\oplus 3$. That is, tensor products with a non-commutative irrep of $S_3$ will always yield at least one non-commutative irrep. Therefore, combining non-commutative outcomes cannot yield a supersite with only support on abelian subspaces. It would also be interesting to go beyond finite groups.

\section{Acknowledgments}
We thank Maria Sharshunova for assistance with preparing the figures. We thank an anonymous referee of Ref.~\cite{Gunn2025} for suggesting that it may be possible to generalize to nilpotent groups. DG, TK, and BK acknowledge financial support from the Austrian Science Fund (FWF) through the grants SFB BeyondC (Grant No. F7107-N38), and P 32273-N27 (Stand-Alone Project). Furthermore, we acknowledge the BMW endowment fund. This publication has received funding from the European Union's Horizon 2020 HORIZON Research and Innovation Actions Programme under the calls HORIZON-CL4-2022-QUANTUM-02-SGA via Grant Agreement No. 101113690 (PASQuanS2.1) and HORIZON-CL4-2021-DIGITAL-EMERGING-02-10 via Grant Agreement No. 101080085 (QCFD). GS acknowledges financial support by the Deutsche Forschungsgemeinschaft (DFG, German Research Foundation) under Germany's Excellence Strategy -- EXC-2111 -- 390814868.

\appendix
\onecolumngrid

\section{Group structure of irreps of nilpotent groups}\label{App:IrrepGroups}

In this appendix, we show that the groups defined in the main text, $(\Irr_m/{\sim}, \circ)$, are indeed a groups and are isomorphic to $ G_{m-1}/G_m$. In fact, the way we will do this by first introducing a third group, $(\Irr_m/{\sim_\ast},\ast)$, and showing $G_{m-1}/G_m\cong (\Irr_m/{\sim_\ast},\ast) $. We will then show $(\Irr_m/{\sim_\ast},\ast) =(\Irr_m/{\sim},\circ)$.

\subsection{Proof of the isomorphism $(\Irr_m/{\sim_\ast},\ast) \cong G_{m-1}/G_m$}
In this section, we introduce the group $(\Irr_m/{\sim_\ast},\ast)$ and show its isomorphic to $G_{m-1}/G_m$. To begin, we define the equivalence relation $\sim_\ast$. To this end, let $\alpha\in \Irr_m$ be an irrep, and $U^\alpha_g$ a unitary representation thereof. By the definition of $\Irr_m$ (see Eq.~\eqref{eq:Irra}), we have $U_g^\alpha=\id$ for all $g\in G_m$. Thus, by definition of $G_m$ and $G_{m-1}$, for every $c\in G_{m-1}$ we have that $(U_c^\alpha)^{-1}(U_h^\alpha)^{-1}U_c^\alpha U^\alpha_h=\id$ for all $h\in G$. Therefore, by Schur's lemma, we have that
\begin{equation}
    U^\alpha_c = e^{i\tilde{\varphi}_c^\alpha}\id_{d^\alpha}  \label{eq:1DIrrepsfromSchur}
\end{equation}
for all $c\in G_{m-1}$, where $e^{i\tilde{\varphi}_c^\alpha}$ is a 1D irrep of $G_{m-1}$ with kernel containing $G_m$. The map $e^{i\varphi^\alpha}:G_{m-1}/G_m\rightarrow U(1)$ is the defined by quotienting out $G_m$:
\begin{equation}
    e^{i\varphi^\alpha_{[g]}}=e^{i\tilde{\varphi}^\alpha_g},
\end{equation}
where $[\cdot]$ denotes the canonical quotient map, $[\cdot]:G_{m-1}\rightarrow G_{m-1}/G_m$. It is easy to verify $e^{i\varphi^\alpha}$ is well defined and an 1D irrep of $G_{m-1}/G_m$. Thus, we can define a map from $\Irr_m$ to the irreps of $G_{m-1}/G_m$ by
\begin{equation}
    \tilde{\Phi}:\alpha\mapsto e^{i\varphi^\alpha}.
\end{equation}

We now show $\tilde{\Phi}$ is surjective. To this end, let $e^{i\phi}\in \Irr(G_{m-1}/G_m)$ and let $e^{i\tilde{\phi}}\in \Irr(G_{m-1})$ be defined by $e^{i\tilde{\phi}}=e^{i\phi}\circ \Pi$, where $\Pi$ is the canonical projection map. Now consider the representation of $G$ induced   \cite{Isaacs2006_CharacterTheoryFiniteGroups} by $e^{i\tilde{\phi}}\in \Irr(G_{m-1})$. That is, let $k_i\in G$ be representative for the coset $i\in G/G_{m-1}$ and define $\xi(g,i)$ and $h(g,i)$ by
\begin{align}
    g k_i &= k_{\xi(g,i)} h(g,i).
\end{align}
Note that not only is $G_{m}$ normal in $G_{m-1}$, it is also normal in $G$, and thus $G/G_{m-1}$ is well-defined. Then the induced representation $\text{Ind}^G_H e^{i\tilde{\phi}}$ is given by
\begin{equation}
    \text{Ind}^G_{G_{m-1}} e^{i\tilde{\phi}} (g) = \sum_{i\in G/[G_{a-1}]} e^{i\tilde{\phi}} (h(g,i)) \ket{\xi(g,i)}\bra{i} \label{eq:InducedRep}
\end{equation}
It is easily verified that $\text{Ind}^G_{G_{m-1}} e^{i\tilde{\phi}}$ is indeed a representation of $G$ and therefore can be decomposed into irreps of $G$. Note, we also have
\begin{equation} \label{eq:InducedRepOnNilpotentSubgroup}
    \text{Ind}^G_{G_{m-1}} e^{i\tilde{\phi}} (h) = e^{i\tilde{\phi}} (h) \id_{|G/G_{m-1}|}
\end{equation}
for all $h\in G_{m-1}$. In particular therefore, all irreps in $\text{Ind}^G_{G_{m-1}} e^{i\tilde{\phi}}$ are mapped to $e^{i\phi}$ by $\Phi$. Therefore, there is an $\alpha\in \Irr_{m}$ such that $\Phi(\alpha)=e^{i\phi}$.

As $\tilde{\Phi}$ is surjective, we can promote it to a bijective map by defining an equivalence relation $\sim_\ast$ on $\Irr_m$. Namely, we say that $\alpha\sim_\ast\beta$ if  $\tilde{\Phi}(\alpha)=\tilde{\Phi}(\beta)$. Clearly, this is an equivalence relation. We denote the equivalence class that $\alpha\in \Irr_m$ belongs to by $[\alpha]$. Then the map $\Phi: \Irr_m/{\sim_\ast}\rightarrow \Irr(G_{m-1}/G_m)$ defined by 
\begin{equation} \label{eq:PhiMap}
    \Phi([\alpha])=\tilde{\Phi}(\alpha)
\end{equation} 
is clearly a bijection, with $\Phi^{-1}(1)=\Irr_{m-1}$. 

Finally, we can equip the set of equivalence classes $\Irr_m/{\sim_\ast}$ with a group multiplication
\begin{equation}
    [\alpha]\ast [\beta]\equiv\Phi^{-1}(\Phi([\alpha])\Phi([\beta])),.
\end{equation}
where $\Phi^{-1}$ is the pre-image. This clearly ensures $\Phi$ is a group homomorphism and therefore, as its bijective, an isomorphism. We have thus proven that
\begin{equation}\label{eq:isomorphism}
(\Irr_m/{\sim_\ast},\ast)\cong\Irr\left(\frac{G_{m-1}}{G_m}\right)\cong \frac{G_{m-1}}{G_m},
\end{equation}
where the last isomorphism follows from the fact that $G_{m-1}/G_m$ is abelian and the fact finite abelian groups are isomorphic to their irreducible representations. \\

\subsection{Proof of the isomorphism $(\Irr_m/{\sim_\ast},\ast)=(\Irr_m/{\sim},\circ)$}
We now show that the group $(\Irr_{m}/{\sim},\circ)$ from the main text is in fact exactly the same as $(\Irr_{m}/{\sim_\ast},\ast)$. To begin, for convenience, let us recall the definition of $(\Irr_m/{\sim}, \circ)$. As we will be later be comparing equivalence relations, let us use $\sim_\circ$ to refer to the original equivalence relation, Eq.~\eqref{eq:EquivalenceRelation@Levelm}. That is, for $\alpha,\beta\in \Irr_m$, we have
\begin{equation}
    \beta \sim_\circ \alpha \text{ if } \exists \gamma\in \Irr_{m-1} \text{ : } \alpha\in \gamma\otimes \beta,
    \label{eq:AppEquivRel}
\end{equation}
where $\alpha\in \gamma\otimes \beta$ means that the irrep $\alpha$ appears in the decomposition of $\gamma\otimes \beta$. Group multiplication is given by 
\begin{equation}
    [\alpha]_\circ \circ [\beta]_\circ = \{\delta \in \Irr_m : \exists \alpha'\in[\alpha]_\circ, \beta'\in[\beta]_\circ : \delta\in \alpha'\otimes \beta' \}
     \label{eq:AppGroupMultiplication}
\end{equation}
We will now show
\begin{equation}
    (\Irr_m/{\sim}_\circ,\circ)= (\Irr_m/{\sim}_\ast,\ast)
\end{equation}
Before showing this let us note the following observation
\begin{obs}
    For all $\alpha\in \Irr$, there is a unique irrep $\alpha^{-1}$ such that $1\in \alpha\otimes\alpha^{-1}$, namely the conjugate representation $\alpha^{-1}=\alpha^*$. Moreover, $\Irr_m=\Irr_m^*$.
    \label{obs:ConjugateIsUniqueInv}
\end{obs}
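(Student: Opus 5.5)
The statement has two parts: every irrep $\alpha$ has a unique partner $\alpha^{-1}$ with $1\in\alpha\otimes\alpha^{-1}$, and this partner is $\alpha^*$; and $\Irr_m$ is closed under conjugation, $\Irr_m=\Irr_m^*$. The plan is to reduce the first part to Schur's lemma via the identification $V\otimes W\cong \mathrm{Hom}(W^*,V)$, and to read off the second part directly from the definition in Eq.~\eqref{eq:Irra}. The argument is pure character theory and does not use nilpotency.

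For the first part, note that the multiplicity of the trivial irrep $1$ in $U^\alpha_g\otimes U^\beta_g$ equals
\begin{equation*}
\frac{1}{|G|}\sum_{g\in G}\chi^\alpha_g\chi^\beta_g=\langle \chi^{\beta^*},\chi^\alpha\rangle ,
\end{equation*}
using $\chi^{\beta^*}_g=\overline{\chi^\beta_g}$. By orthogonality of irreducible characters, this equals $\delta_{\alpha,\beta^*}$. Hence $1\in\alpha\otimes\beta$ holds if and only if $\beta\cong\alpha^*$. This gives existence (take $\beta=\alpha^*$) and uniqueness up to isomorphism simultaneously. We also get that the trivial irrep appears in $\alpha\otimes\alpha^*$ exactly once.

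Equivalently, one can argue that $\alpha\otimes\beta\cong\mathrm{Hom}(\beta^*,\alpha)$ as $G$-representations. The $G$-invariant vectors in this space are exactly the intertwiners $\beta^*\to\alpha$. By Schur's lemma, a nonzero intertwiner exists if and only if $\beta^*\cong\alpha$, and the space of intertwiners is then one-dimensional. Either route works; I would present the character computation because it is the shortest.

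For the second part, take $\alpha\in\Irr_m$, so that $U^\alpha_g=\id$ for all $g\in G_m$. The conjugate representation satisfies $U^{\alpha^*}_g=\overline{U^\alpha_g}=\id$ for all $g\in G_m$. Moreover $\alpha^*$ is irreducible, since its character has the same norm $\langle\chi,\chi\rangle=1$. Therefore $\alpha^*\in\Irr_m$, which shows $\Irr_m^*\subseteq\Irr_m$. Because conjugation is an involution, applying this inclusion again gives $\Irr_m=(\Irr_m^*)^*\subseteq\Irr_m^*$, so the two sets are equal.

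No step here is a real obstacle. The only care needed is to state that uniqueness holds up to isomorphism of irreps, which is how elements of $\Irr$ are labeled throughout the paper.
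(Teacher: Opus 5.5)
Your proof is correct. The Hom/Schur argument you list as an alternative is exactly the paper's proof. The paper realizes the identification $\alpha\otimes\beta\cong\mathrm{Hom}(\beta^*,\alpha)$ concretely by writing an invariant vector as $\ket{\psi}=\psi\otimes\id\ket{\phi^+}$. Invariance then becomes the intertwining condition $U^\alpha_g\psi=\psi\,\overline{U^\beta_g}$, and Schur's lemma gives $\beta\cong\alpha^*$. For the second part, the paper says only that $\Irr_m=\Irr_m^*$ is immediate from Eq.~\eqref{eq:Irra}; you spell this out.

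The character-orthogonality computation you prefer is an equally short, equivalent route, since orthogonality is itself a consequence of Schur's lemma. It has the small advantage of giving the multiplicity of $1$ in $\alpha\otimes\beta$ exactly as $\delta_{\beta,\alpha^*}$, so uniqueness and multiplicity one come for free. The paper's vectorized version instead exhibits the invariant vector explicitly, which fits the operator-level calculations used later in the paper.

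A minor point: both your existence claim and your use of orthogonality against $\chi^{\beta^*}$ rely on $\alpha^*$ being irreducible. You prove this only in the second part, so it is cleaner to state it before the character computation.
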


\begin{proof}
    The first statement is a consequence of Schur's lemma. To see this explicitly, let $\alpha,\beta\in\Irr$ then $\alpha\otimes\beta\ni1$. iff there is a non-zero vector $\ket{\psi}=\psi\otimes \id \ket{\phi^+}$ such that $U^\alpha_g\otimes U^\beta_g \ket{\psi}=\ket{\psi}$ for all $g\in G$. This holds iff $U^\alpha_g \psi (U_g^\beta)^T =\psi$ which is equivalent to $U^\alpha_g \psi =\psi (U^\beta_g)^*$, for all $g\in G$. Thus by Schur's Lemma, $\beta=\alpha^*$. The statement that $\Irr_m=\Irr_m^*$ follows trivially from the definition of $\Irr_m$.
\end{proof}

Now let us show $ (\Irr_m/{\sim}_\circ,\circ)= (\Irr_m/{\sim}_\ast,\ast)$.
\begin{proof}
First, let us verify that the equivalence classes are the same, i.e., 
\begin{equation}
    \alpha \stackrel{\circ}{\sim}\beta \Leftrightarrow \alpha \stackrel{\ast}{\sim} \beta.
\end{equation}
($\Leftarrow$). If $\alpha \stackrel{\ast}{\sim}\beta$, then there is a $\gamma \in \Irr_{m-1}$ such that $\alpha\in\gamma\otimes \beta$. Now consider a unitary representation of $\alpha,\beta$ and $\gamma$ and consider the restriction to $G_{m-1}$. Recall we have 
\begin{equation}
    U^\alpha_c = e^{i\tilde{\varphi}_c^\alpha}\id_{d^\alpha}  \label{eq:1DIrrepsfromSchur}
\end{equation}
for all $c\in G_{m-1}$ and likewise for $\beta$ and $\gamma$. However, $\gamma\in \Irr_{m-1}$ and therefore $e^{i\phi^{\tilde{\gamma}}_c}=1$ and thus
\begin{equation}
    U_c^\gamma \otimes U_c^\beta = e^{i\tilde{\phi}^\beta_c} \id_{d^\gamma d^\beta}
\end{equation}
for all $c\in G_{m-1}$. As $\alpha\in \gamma\otimes\beta$, this means $e^{i\tilde{\phi}^\beta_c}=e^{i\tilde{\phi}^\alpha_c}$ and thus
\begin{align}
    \exp(i\phi^{\alpha}_{[c]})&=\exp(i\phi^{\beta}_{[c]}),
\end{align}
for all $[c]\in G_{m-1}/G_m$. Thus $\alpha\stackrel{\circ}{\sim}\beta$.

\noindent ($\Rightarrow$). Now let $\alpha \stackrel{\circ}{\sim}\beta$, i.e., $\exp(i\phi^{\beta}_{[g]})=\exp(i\phi^{\alpha}_{[g]})$. Considering $c\in G_{m-1}$, it is clear if we tensor the conjugate of $\alpha$ with $\beta$ we have that $(U^\alpha_{c})^*\otimes U^\beta_c=\id_{d^\alpha d^\beta}$. Thus, by the definition of $\Irr_{m-1}$, all irreps in the decomposition belong to $\Irr_{m-1}$, $\alpha^*\otimes\beta\subseteq \Irr_{m-1}$. Let $\gamma\in \Irr_{m-1}$ be one such irrep, i.e., $\gamma\in \alpha^*\otimes\beta$. Then we must have $1\in \alpha^*\otimes(\gamma^*\otimes \beta)$. As $\gamma^*\otimes\beta$ decomposes into irreps, i.e., $\gamma^*\otimes \beta=\oplus_i\delta_i$, and then $\alpha^*\otimes (\gamma^*\otimes \beta)\cong \oplus_i (\alpha^*\otimes \delta_i)\ni 1 $, this means, by Obs. \ref{obs:ConjugateIsUniqueInv}, at least one of the irreps $\delta_i$ must be $\alpha$. Therefore, $\alpha\in \gamma^*\otimes\beta$, with $\gamma^*\in \Irr_{m-1}$, and thus, $\alpha\stackrel{\ast}{\sim}\beta$.

Having verified that the equivalences classes are the same, let us now verify the group multiplication is the same, i.e.,
\begin{equation}
    [\alpha] \circ [\beta] = [\alpha] \ast [\beta]
\end{equation}
Note, both the left and right hand sides are sets. Thus, we show they are equivalent by showing they are subsets of one another.

\noindent ($\supseteq)$. Let $\delta\in [\alpha] \ast [\beta]$. Then $\exists \alpha'\in[\alpha],\beta'\in[\beta]$ such that $\delta \in \alpha'\otimes \beta'$. Again considering restrictions to $G_{m-1}$, it is therefore clear that
\begin{equation}
    \exp(i\phi^{\gamma}_{[g]})=\exp(i\phi^{\alpha}_{[g]})\exp(i\phi^{ \beta}_{[g]})
\end{equation}
and, thus, $\gamma\in[\alpha]\circ[\beta]$.

\noindent $(\subseteq)$. Let $\gamma\in [\alpha]\circ[\beta]$. That is $\exp(i\phi^{\gamma}_{[g]})=\exp(i\phi^{\alpha}_{[g]})\exp(i\phi^{\beta}_{[g]})$. Therefore, we have $\gamma^*\otimes(\alpha\otimes\beta)\subseteq \Irr_{m-1}$. Let $\delta\in \gamma^*\otimes(\alpha\otimes\beta)$. Thus, $1\in \gamma^*\otimes ((\delta^*\otimes \alpha)\otimes \beta)$. Note, by definition $\delta^*\otimes \alpha\in[\alpha]$. Therefore, defining $\alpha' =\delta^*\otimes \alpha$, by Obs \ref{obs:ConjugateIsUniqueInv}, we have $\gamma\in \alpha'\otimes \beta$. Thus $\gamma\in [\alpha]\ast [\beta]$.
\end{proof}

Thus, $(\Irr_m/{\sim},\circ)$ is a group isomorphic to $G_{m-1}/G_m$. Considering the definitions in Eq.~\eqref{eq:AppEquivRel} and Eq.~\eqref{eq:AppGroupMultiplication}, it is clear that $[1]_m=\Irr_{m-1}$ is the group identity element. Moreover, by Observation \ref{obs:ConjugateIsUniqueInv} we have that $[\alpha]^{-1}=[\alpha]^*$.

\section{Example of the group structure of irreps of nilpotent groups}\label{app:Example}
In this appendix, we illustrate the structure of the groups $(\Irr_m/\sim_,\circ)$. As an example, we consider $G=D_{16}$, the dihedral group of order 16. By examining the successive commutator groups, $G_{m+1}=[G_m,G]$, we obtain the lower central series
\begin{equation}
    D_{16} \unrhd \mathds{Z}_4 \unrhd \mathds{Z}_2 \unrhd 1,
\end{equation}
and therefore, $G$ is a class-3 nilpotent group. The abelian factor groups, $G_{m-1}/G_{m}$, representing the group structures of $(\Irr_m/\sim_,\circ)$, are therefore given by $K_4, \mathds{Z}_2$, and $\mathds{Z}_2$. 

\subsection{The hierarchy of irreps}
Let us begin by constructing the sets, $\Irr_m$, of irreps in Eq.~\eqref{eq:HierachyIrr} in the main text. Recall, that $\Irr_m$ consists of those irreps that act trivially for all $g\in G_m$, see Eq.~\eqref{eq:Irra}. Equivalently, it contains those irreps whose kernels contain $G_m$, with the kernel of an irrep being defined by $\mathrm{ker}(\alpha)=\{g\in G : U_{g}^{\alpha}=\id \}$. Those kernels, and subsequently the sets $\Irr_m$, can be identified from the below character table of $D_{16}$. Here, the rows, labeled by $\alpha$, correspond to the different irreps and the columns correspond to the conjugacy classes of $G$~\cite{note6}.
\begin{equation}
    \centering
    \begin{tabular}{c||c|c|c|c|c|c|c}
       $\alpha$ & \textit{1} & \textit{2a} & \textit{2b} & \textit{2c} & \textit{4} & \textit{8a} & \textit{8b}  \\ \hline
        1 & 1& 1& 1& 1& 1& 1& 1 \\
        2 & 1& 1& -1& 1& 1& -1& -1 \\
        3 & 1& 1& 1& -1& 1& -1& -1 \\
        4 & 1& 1& -1& -1& 1& 1& 1 \\
        5 & 2& 2& 0& 0& -2& 0& 0 \\
        6 & 2& -2& 0& 0& 0& $\sqrt{2}$& $-\sqrt{2}$ \\
        7 & 2& -2& 0& 0& 0& $-\sqrt{2}$& $\sqrt{2}$ \\
    \end{tabular}
\end{equation}
The kernel of an irrep $\alpha$ is given by those elements for which $\chi(g)=\chi(1)$. Then, to check if a kernel contains $G_m$, we express $G_m$ in terms of conjugacy classes
 \begin{align}
     Z_4 &\cong \textit{1} \cup \textit{2a} \cup \textit{4}\\
     Z_2 &\cong \textit{1} \cup \textit{2a}\\
     1 &\cong \textit{1}.
 \end{align}
By inspecting the character table it is then easily verified that the corresponding hierarchy of irreps is given by
\begin{align}
    \Irr_0&=\{1\}\\
    \Irr_1&=\{1,2,3,4\}\\
    \Irr_2&=\{1,2,3,4,5\}\\
    \Irr_3&=\{1,2,3,4,5,6,7\}=\Irr.
\end{align}

\subsection{Group structure}
Next, to identify corresponding equivalence classes we need to consider tensor products of irreps and their decompositions. We find the following:
\begin{equation}
    \centering
    \begin{tabular}{c||c|c|c|c|c|c|c}
        $\otimes $ & 1 & 2 & 3 & 4 & 5 & 6 & 7  \\ \hline \hline
        1 & 1 & 2 & 3 & 4 & 5 & 6 & 7 \\ \hline
        2 & 2 & 1 & 4 & 3 & 5 & 7 & 6 \\ \hline
        3 & 3 & 4 & 1 & 2 & 5 & 7 & 6 \\ \hline
        4 & 4 & 3 & 2 & 1 & 5 & 6 & 7 \\ \hline
        5 & 5 & 5 & 5 & 5  & $1\oplus2\oplus3\oplus4$ & $6\oplus 7$ & $6\oplus 7$ \\ \hline
        6 & 6 & 7 & 7 & 6 & $6\oplus 7$ & $1\oplus 4\oplus 5$ & $2\oplus3\oplus 5$ \\ \hline
        7 & 7 & 6 & 6 & 7 & $6\oplus 7$ & $2\oplus3\oplus5$  & $1\oplus 4\oplus 5$ \\ 
    \end{tabular}
\end{equation}
To form equivalence classes in $\Irr_m$ under the equivalence relation stated in Eq.~\eqref{eq:EquivalenceRelation@Levelm} in the main text, we consider the irreps in $\Irr_m$ and their tensor products by irreps in $\Irr_{m-1}$. For instance, for $m=3$, $\Irr_m=\Irr$. Consider the irreps $5$ and $1$. Clearly, there is an irrep, $\gamma$, in $\Irr_{m-1}=\{1,2,3,4,5\}$ that such that $\gamma\otimes 5\ni 5$ (namely 1). Conversely, $5\in\Irr_{m-1}$ and $5\otimes 5\ni 1$. Thus, $1\sim 5$. Continuing this analysis, yields a partitioning of each $\Irr_m$ by the equivalence class described above, yielding
\begin{align}
    \Irr_0/{\sim} &=\{\{1\}\}\\
    \Irr_1/{\sim}&=\{\{1\},\{2\},\{3\},\{4\}\}\\
    \Irr_2/{\sim}&=\{\{1,2,3,4\},\{5\}\}\\
    \Irr_3/{\sim}&=\{\{1,2,3,4,5\},\{6,7\}\}.
\end{align}

Now, to see the group structure, consider as an example $\Irr_3/\sim$. According to the above analysis, we should have $\Irr_3/{\sim} \cong \mathds{Z}_2$. 
To see this consider tensor products of irreps. We have 
\begin{align}
    1\otimes6=4\otimes 6= 2\otimes 7=3\otimes 7=6 \in &\{6,7\}\\
    1\otimes 7=4\otimes7=2\otimes6=3\otimes 6=7\in&\{6,7\}\\
    5\otimes 5 = 6\oplus 7 \subseteq &\{6,7\}.
\end{align}
Thus, we can see that $\{1,2,3,4,5\}\circ\{6,7\}=\{6,7\}$. It is similarly verified that $\{1,2,3,4,5\}\circ\{1,2,3,4,5\}=\{1,2,3,4,5\}$ and $\{6,7\}\circ\{6,7\}=\{1,2,3,4,5\}$. Then, $\Irr_3/\sim$ is indeed isomorphic to $\mathds{Z}_2$. Similarly, it is easy to verify that $\Irr_1/{\sim}\cong G_0/G_1\cong K_4$ and $\Irr_2/{\sim}\cong G_1/G_2\cong \mathds{Z}_2$. Moreover, we can verify that $[1]_m=\Irr_{m-1}$ acts as the identity element for each $m$.

\subsection{Projectors}
Finally, let us continue with $G=D_{16}$ and consider a example of Eq.~\eqref{eq:projectorRule}, which we restate for convenience:
\begin{equation}
    P_{[\alpha]_m}^{(n)}\otimes P_{[\beta]_m}^{(n')} \subseteq P_{[\alpha]_m \circ [\beta]_m}^{(n + n')}.
\end{equation}
To simplify things, consider an onsite symmetry $U_g=1 \oplus 6\oplus 7$ (the following also holds for any representation, including the regular representation, but demonstrating the notation for larger examples is tedious). Then 
\begin{align}
   P_{[1]_M}^{(1)}&= (1)_1\oplus (0_2)_6\oplus (0_2)_7 \\
   P_{[2]_M}^{(1)}&=(0)_1\oplus (\id_2)_6\oplus (\id_2)_7,
\end{align}
where the subscripts label the irreps corresponding to the subspace, and $0_2$ is shorthand for a $2\times 2$ zero matrix. Then
\begin{align}
    P_{[2]_M}^{(1)}\otimes P_{[2]_M}^{(1)}
    &\ \ = (0 \otimes 0)_{1\otimes 1} \oplus (0 \otimes \id_2)_{1\otimes 6} \oplus (0 \otimes \id_2)_{1\otimes 7} \\
    &\quad  \ \ \ (\id_2 \otimes 0)_{6\otimes 1} \oplus (\id_2 \otimes \id_2)_{6\otimes 6} \oplus (\id_2 \otimes \id_2)_{1\otimes 7} \\
    &\quad \ \ \  (\id_2 \otimes 0)_{7\otimes 1} \oplus (\id_2 \otimes \id_2)_{7\otimes 6} \oplus (\id_2 \otimes \id_2)_{7\otimes 7} \\
    &\ \ \subseteq (1)_{1\otimes 1} \oplus (0_2)_{1\otimes 6} \oplus (0_2)_{1\otimes 7} \\
    &\quad  \ \ \ (0_2)_{6\otimes 1} \oplus (1\oplus 1\oplus \id_2)_{6\otimes 6} \oplus (1\oplus 1\oplus \id_2)_{1\otimes 7} \\
    &\quad \ \ \  (0_2)_{7\otimes 1} \oplus (1\oplus 1\oplus \id_2)_{7\otimes 6} \oplus (1\oplus 1\oplus \id_2)_{7\otimes 7} \\
    &\ \ =P^{(2)}_{[1]_M},
\end{align}
where in the third line we have decomposed $6\otimes 6 =7\otimes 7 =1\oplus 4\oplus 5$ and $6\otimes 7=7\otimes 6=2\oplus 3\oplus 5$. Thus, $ P^{(1)}_{[2]_M}\otimes P^{(1)}_{[2]_M}$ only has support on subspaces corresponding to irreps in $\Irr_{M-1}=[1]_M=[2]_M\circ[2]_M$. Note, that $P_{[2]_M}^{(1)}\otimes P_{[2]_M}^{(1)}\ne P_{[1]_M}^{(2)}$.

\section{Proof of Lemma~\ref{lem:IrrepSumTrick}}\label{app:SumProof}
In this section, we prove Lemma~\ref{lem:IrrepSumTrick}, which we restate here for convenience. 
\setcounter{thm}{0}
\begin{lem}
Let $\Irr_m/{\sim}\cong G_{m-1}/G_m=\{[\alpha]\}$, and $\Phi:\Irr_m/{\sim}\rightarrow \Irr(G_{m-1}/G_m)$ as defined in Eq.~\eqref{eq:PhiMap} with $\Phi([\alpha])=e^{i\varphi^{[\alpha]}}$. Then,
\begin{equation}
    P^{(n)}_{[\alpha]}= \sum_{g\in G_{m-1}} \frac{e^{-i\varphi^{[\alpha]}([g])}}{\abs{G_{m-1}}} U_g^{\otimes n}.
\end{equation}
\end{lem}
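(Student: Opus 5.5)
Write $Q:=\sum_{g\in G_{m-1}} \frac{e^{-i\varphi^{[\alpha]}([g])}}{\abs{G_{m-1}}} U_g^{\otimes n}$. The plan is to show directly that $Q=P^{(n)}_{[\alpha]}$ by evaluating $Q$ on each isotypic component of $U_g^{\otimes n}$.

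\textbf{Step 1: only levels up to $m$ matter.} The isotypic decomposition is
\begin{equation}
U_g^{\otimes n}\cong\bigoplus_{\beta\in\Irr} \id_{m_\beta}\otimes U^\beta_g .
\end{equation}
Restricting this to $g\in G_{m-1}$, we need $U^\beta_g$ for $\beta\in\Irr$. For $\beta\in\Irr_m$, Eq.~\eqref{eq:1DIrrepsfromSchur} gives $U^\beta_c=e^{i\tilde\varphi^\beta_c}\id$ for all $c\in G_{m-1}$, with $e^{i\tilde\varphi^\beta_c}=\Phi([\beta])([c])$. However, $P^{(n)}_{[\alpha]}$ is defined in Eq.~\eqref{eq:ProjectorOntoIrrepClasses} only via irreps of $\Irr_m$. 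The statement therefore implicitly assumes that $U_g^{\otimes n}$ is supported on $\Irr_m$, or that the projector is understood on that support. This is exactly how it is used in the protocol, where rounds act on supersites already supported in $\Irr_m$. I would make this assumption explicit first: restrict to the subspace $P_{\Irr_m}$ on which all irreps lie in $\Irr_m$. The reason is that irreps outside $\Irr_m$ need not be scalar on $G_{m-1}$, so $Q$ does not vanish there in general.

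\textbf{Step 2: compute $Q$ block by block.} On the $\beta$-block with $\beta\in\Irr_m$, $Q$ acts as
\begin{equation}
\frac{1}{\abs{G_{m-1}}}\sum_{g\in G_{m-1}}\overline{\Phi([\alpha])([g])}\,\Phi([\beta])([g])\;\id .
\end{equation}
Each summand depends only on $[g]$, and every fibre of the quotient map $G_{m-1}\to G_{m-1}/G_m$ has size $\abs{G_m}$. The sum therefore reduces to the average over $G_{m-1}/G_m$ of the product of two characters of this abelian group. By character orthogonality for the abelian group $G_{m-1}/G_m$, this average equals $\delta_{\Phi([\alpha]),\Phi([\beta])}$. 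Since $\Phi$ is a bijection (Eq.~\eqref{eq:PhiMap}), this is $\delta_{[\alpha],[\beta]}$.

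\textbf{Step 3: conclude.} By Step 2, $Q$ is the identity on exactly those blocks with $\beta\in[\alpha]$ and zero on all other $\Irr_m$-blocks. This is precisely $\sum_{\beta\in[\alpha]}P^{(n)}_\beta=P^{(n)}_{[\alpha]}$, using the identification of $\sim$ with $\sim_\ast$ proved in Appendix~\ref{App:IrrepGroups}.

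\textbf{Main obstacle.} The only delicate point is the support issue in Step 1. The identity holds on the $\Irr_m$-supported subspace, equivalently as $P^{(n)}_{[\alpha]}=Q\,P_{\Irr_m}$. I would state the lemma in that form, and note that in every application the state already lies in that subspace by the previous round.
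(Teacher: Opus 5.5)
Your Steps 2 and 3 are correct: on an isotypic block with $\beta\in\Irr_m$, $Q$ acts as the scalar $\delta_{[\alpha],[\beta]}$ by character orthogonality on the abelian quotient $G_{m-1}/G_m$. The gap is in Step 1. There you assert that $Q$ need not vanish on blocks with $\beta\notin\Irr_m$, and you retreat to the weaker identity $P^{(n)}_{[\alpha]}=Q\,P_{\Irr_m}$. That assertion is false. As a result your argument does not prove the lemma as stated, which holds on all of $\mathcal{H}^{\otimes n}$ with no support hypothesis. Note that $P^{(n)}_{[\alpha]}$ is a well-defined projector on the whole space: it is simply zero on isotypic components outside $\Irr_m$. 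Only the completeness of the family $\{P^{(n)}_{[\alpha]}\}$ fails there.

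The missing step is as follows. Let $\psi=\Phi([\alpha])\circ\Pi$, a character of $G_{m-1}$ trivial on $G_m$. On the $\beta$-block, $Q$ acts as $\frac{1}{\abs{G_{m-1}}}\sum_{h\in G_{m-1}}\overline{\psi(h)}\,U^\beta_h$. This is the orthogonal projector onto $W_\beta=\{v:\ U^\beta_h v=\psi(h)v\ \forall h\in G_{m-1}\}$. Since $x^{-1}hx=h[h,x]$ with $[h,x]\in[G_{m-1},G]=G_m\subseteq\ker\psi$, the character $\psi$ is invariant under conjugation by $G$, so $W_\beta$ is $G$-invariant. By irreducibility $W_\beta$ is either $0$ or all of $V_\beta$. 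The latter means $U^\beta_h=\psi(h)\id$ on $G_{m-1}$, hence $U^\beta_h=\id$ on $G_m$, i.e. $\beta\in\Irr_m$ (indeed $\beta\in[\alpha]$). So $Q$ vanishes on every block with $\beta\notin\Irr_m$, and together with your Steps 2--3 this gives $Q=P^{(n)}_{[\alpha]}$ exactly. Equivalently, by Frobenius reciprocity, the multiplicity of $\psi$ in $\beta|_{G_{m-1}}$ equals the multiplicity of $\beta$ in $\mathrm{Ind}^G_{G_{m-1}}\psi$, all of whose constituents lie in $[\alpha]$.

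The paper avoids the case distinction altogether. It proves the class-function identity $\sum_{\alpha'\in[\alpha]}d^{\alpha'}\overline{\chi^{\alpha'}_g}/\abs{G}=\overline{\psi(g)}\,\delta_{g\in G_{m-1}}/\abs{G_{m-1}}$ for all $g\in G$ by computing the character of $\mathrm{Ind}^G_{G_{m-1}}\psi$ in two ways. Directly, it equals $\frac{\abs{G}}{\abs{G_{m-1}}}\psi(g)\delta_{g\in G_{m-1}}$, which implicitly uses the same conjugation invariance. Via decomposition, it is $\bigoplus_{\alpha'\in[\alpha]}(\alpha')^{\oplus d^{\alpha'}}$. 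The paper then substitutes the identity into Eq.~\eqref{eq:ProjectorOntoIrrepClasses}. Because this is an identity of functions on all of $G$, the resulting operator identity holds in every representation, including on components outside $\Irr_m$. Your block-by-block route is fine, and more elementary, once the vanishing argument above is added.
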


\begin{proof}
    Let $[\alpha]\in \Irr_m/\sim \ \cong G_{m-1}/G_m$. We have, by Eq.~\eqref{eq:ProjectorOntoIrrepClasses} in the main text, that
    \begin{equation}
        P_{[\alpha]}^{(n)}=\sum_{\alpha\in [\alpha]}P_\alpha^{(n)}=  \sum_{\alpha\in [\alpha]} \sum_{g\in G} \frac{d^{\alpha} \bar{\chi}^{\alpha}_g}{|G|} U_g^{\otimes n}.
    \end{equation}
    Let $e^{i\tilde{\phi}^{\alpha}}\in \Irr(G_{m-1})$ be defined by $e^{i\phi^{[\alpha]}}\equiv e^{i\tilde{\phi}^{\alpha}}\circ \Pi$, where $\Pi$ is the canonical projection from $G_{m-1}$ to $G_{m-1}/G_{m}$. We will prove the Lemma by proving the following:
    \begin{equation}
        \sum_{\alpha\in [\alpha]}\frac{d^\alpha\bar{\chi}^\alpha_g}{\abs{G}}=\frac{e^{-i\varphi^{[\alpha]}([g])}}{\abs{G_{m-1}}}\delta_{g\in G_{m-1}}. \label{eq:LemmaEq}
    \end{equation}
    The proof proceeds by relating both sides of the above equation to the character of the induced representation ${\rm Ind}^G_{G_{m-1}}  e^{i \tilde{\phi}^{\alpha}}$ (cf. Eq.~\eqref{eq:InducedRep}). It is easy to verify that its character is given by
    \begin{equation}
        \tr[{\rm Ind}^G_{G_{m-1}}  e^{i \tilde{\phi}^{\alpha}} (g)]= e^{i \phi^{[\alpha]}} \frac{|G|}{|G_{m-1}|} \delta_{g\in G_{m-1}}.
    \end{equation}
    Another way of expressing the character can be obtained as follows: Due to Eq.~\eqref{eq:InducedRepOnNilpotentSubgroup}, all irreps in ${\rm Ind}^G_{G_{m-1}}  e^{i \tilde{\phi}^{\alpha}}$ belong to $[\alpha]$.
    For any $\alpha' \in [\alpha]$ we can compute its multiplicity, $m^{[\alpha]}(\alpha')$, in the induced representation ${\rm Ind}^G_{G_{a-1}}  e^{i \tilde{\phi}^{\alpha}}$. By simple character theory, we have that
    \begin{align}
        m^{[\alpha]}(\alpha')&= \frac{1}{|G|} \sum_{g\in G} \bar{\chi}_g^{\alpha'} \tr[{\rm Ind}^G_{G_{m-1}}  e^{i \tilde{\phi}^{\alpha}}(g)]\\
        &= \frac{1}{|G|} \sum_{g\in G} \bar{\chi}_g^{\alpha'} \left[ e^{i \phi^{[\alpha]}} \frac{|G|}{|G_{m-1}|} \delta_{g\in G_{m-1}} \right]\\
        &= \frac{1}{|G_{m-1}|} \sum_{g\in G_{m-1}} d^{\alpha'} e^{-i \phi^{[\alpha]}_g} [e^{i \phi^{[\alpha]}_g}]\\
        &= d^{\alpha'}
    \end{align}
    That is, every irrep $\alpha'\in[\alpha]$ appears in ${\rm Ind}^G_{G_{m-1}}  e^{i \tilde{\phi}^{\alpha}}(g)$ with multiplicity equal to its dimension. Therefore, we can write
    \begin{equation}
        {\rm Ind}^G_{G_{m-1}}  e^{i \tilde{\phi}^{\alpha}}(g) \cong \bigoplus_{\alpha'\in [\alpha]} (U^{\alpha'}_g)^{\oplus d^{\alpha'}},
    \end{equation}
    where $U^{\alpha}_g$ is some canonical unitary representation of the irrep $\alpha$. Thus, we can also express the character as
    \begin{align}
        \tr[{\rm Ind}^G_{G_{m-1}}  e^{i \tilde{\phi}^{\alpha}} (g)]&= \sum_{\alpha' \in [\alpha]} d^{\alpha'} \chi_g^{\alpha'} = e^{i \phi^{[\alpha]}} \frac{|G|}{|G_{m-1}|} \delta_{g\in G_{m-1}}.
    \end{align}
    Taking the conjugate of both sides of this last equality, we have Eq.~\eqref{eq:LemmaEq} and hence the lemma.
\end{proof}

\section{GHZ protocol: Verification of the final state}\label{Appendix:VerificationGHZIsOutputed}
In this appendix, we show that the protocol in Section~\ref{sec:GHZProtocol} outputs a state locally symmetric-unitary equivalent to the desired GHZ state. For simplicity, let us consider a class-3 nilpotent group. It will be clear how the construction extends to larger classes. Additionally, let us consider the case where no Part 2 is required and no circuits are needed; that is, in Part 1, the measurement outcomes conspire so that each measurement output, $\vec{x}^{(m)}$, already decomposes into complete substrings of length at most $|G_{M-m}|/|G_{M-(m-1)}|$. We will address the effect of Part 2 and permutation circuits at the end of the section. 

Due to this assumption, we may write the outcomes of the first two measurement rounds as 
\begin{align}
    \vec{x}^{(2)}&=(\vec{x}^{(2)}_i)_{i=1}^{N^{(3)}}=((x^{(2)}_{ij})_{j=1}^{N^{(2)}_i})_{i=1}^{N^{(3)}} \\
    \vec{x}^{(1)}&=(\vec{x}^{(1)}_i)_{i=1}^{N^{(3)}}=((\vec{x}^{(1)}_{ij})_{j=1}^{N_i^{(2)}})_{i=1}^{N^{(3)}}=(((x^{(1)}_{ijk})_{k=1}^{N_{ij}^{(1)}})_{j=1}^{N_i^{(2)}})_{i=1}^{N^{(3)}}
\end{align}
where (see Fig.~\ref{fig:GHZOutputFirstBlock})
\begin{align}
    \circ_k x_{ijk}^{(1)}&=[1]\in \Irr_{M=3}/\sim,\qquad \ \ \forall ij\\
    \circ_j x_{ij}^{(2)}&=[1]\in \Irr_{(M-1)=2}/\sim,\quad \forall i.
\end{align}
Note, that we use the upper index to refer to the measurement round, and the lower multi-index to refer to the partitioning (and subpartitioning) of the string of measurement outcomes according to how they multiply to the identity. The lower index is a multi-index, so, for example, $x^{(1)}_{11N^{(1)}_{11}}$ is followed by $x^{(1)}_{121}$, which is followed by $x^{(1)}_{122}$, and so on.

Then, in the last round of the protocol, on supersites corresponding to each $\vec{x}_i^{(2)}$ for each $i\in\{1,\dots,N^{(3)}\}$, we project the 1D subspaces of $U_g^{\otimes n'}$, where $n'=\sum_j 2N_{ij}$ (see Fig.~\ref{fig:GHZOutputFirstBlock}). The eigenvectors of these 1D subspaces are given in Obs.~\ref{obs:BasisForAbelianSubspaceOfTensorPowersOfRegularRep}, which we restate for ease of reading. First, we define $Z^\alpha=\sum_{h\in G}  \chi^\alpha_{h^{-1}} \ketbra{h}$ and $X_g = \sum_{h\in G} \ketbra{h\circ g}{h}$. Recall, $Z^\alpha$ and $X_g$ are quasi-commuting with $U_g$.  Then the common eigenvectors of $U_g^{\otimes n}$ are given by  
\begin{equation}
    \ket*{\phi_{\alpha g_2\dots g_n}}= (Z^\alpha \otimes X_{g_2} \otimes \dots \otimes X_{g_n}) \ket{\mathrm{GHZ}_n}.
\end{equation}
Recall, throughout the protocol, we only every measure on the auxiliary qudits of the fiducial state. Therefore $g_{ijkL}$ corresponds to the left auxiliary qudit and $g_{ijkR}$ corresponds to the right auxiliary qudit. Thus, using the notation introduced above, in the last round we project onto 
\begin{equation}
    |\alpha_i, g_{i11R}, g_{i12L}, g_{i12R},\dots, g_{iN_i^{(2)}N_{iN_i}^{(1)}L}, g_{iN_i^{(2)}N_{iN_i}^{(1)}R}\rangle
   \label{eq:AppendixAbelianBasisRegRep}
\end{equation}
for each $i\in\{1,\dots,N^{(3)}\}$ (see Fig.~\ref{fig:GHZOutputFirstBlock}).

\begin{figure}
\centering
\includegraphics[width=0.4\linewidth]{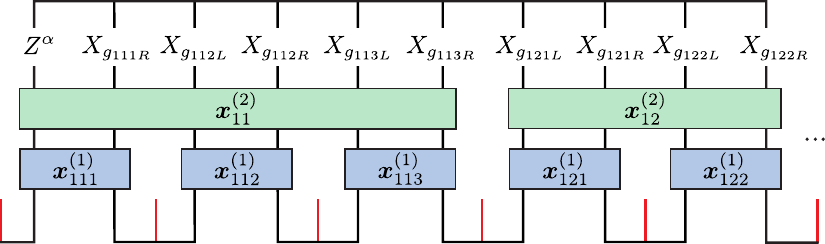}
\caption{ 
    Example of projective measurements on the first substring of sites in round $3$ for a class-3 Nilpotent group (assuming no permutation circuits are required). Let us discuss this image as an aid to understanding the notation introduced. Looking at this image, we can read off the measurement history of the protocol. Starting from the top of the image, we see the final projective measurement, corresponding to Eq.~\eqref{eq:AppendixAbelianBasisRegRep}. Moving down, there were two projective measurements (green boxes) in round two. Thus, these two measurements must have totaled to identity; i.e., $N_1^{(2)}=2$ and $x^{(2)}_{11}\circ\dots\circ x^{(2)}_{1N^{(2)}_1}=[1]$. Looking under the first round-two-measurement, $x^{(2)}_{11}$, we see three round one measurements (blue boxes). Thus, these measurements must have totaled to identity; i.e., $N^{(1)}_{11}=3$ and $ x_{111}^{(1)}\circ\dots\circ x_{11N^{(1)}_{11}}^{(1)}=[1]$. Note, this is just the first substring ($i=1$); the state continues to the right.     
}
\label{fig:GHZOutputFirstBlock}
\end{figure}

We will now show that the state after this projection is, up to on-site quasi-commuting local unitaries, the GHZ state. As we consider the case where no circuits are needed, we may consider the qudits associated with each substring $x^{(2)}_i$ in turn. Let us evaluate the first ($i=1$) substring, an example of which is depicted in Fig.~\ref{fig:GHZOutputFirstBlock}. To this end, we use decomposition of the projectors $P^{(n)}_{[\alpha]}$ at level $m$ as given in  Lemma~\ref{lem:IrrepSumTrick},
\begin{equation}
    P^{(n)}_{[\alpha]}= \sum_{g\in G_{m-1}} \frac{e^{-i\varphi^{\alpha}([g])}}{\abs{G_{m-1}}} U_g^{\otimes n},
    \label{eq:appendixprojectorequation}
\end{equation}
where $U_g$ is the regular representation. Evaluating Fig.~\ref{fig:GHZOutputFirstBlock} yields
\begin{multline}
     \sum_{\substack{\{b_{j}\in G_{M-2}\}_{j}\\
     \{a_{jk}\in G_{M-1}\}_{jk}}} 
        \left( 
            \prod_{j=1}^{N_1^{(2)}} \frac{e^{-i \phi^{x_{1j}^{(2)}}_{[b_{j}]}}}{|G_{M-2}|} 
            \left( 
                 \prod_{k=1}^{N^{(1)}_{1j}} \frac{e^{-i \phi^{x_{1jk}^{(1)}}_{[a_{jk}]}}}{|G_{M-1}|} 
            \right)
        \right)
      \times
      \begin{aligned}
        \includegraphics[width=0.5\linewidth]{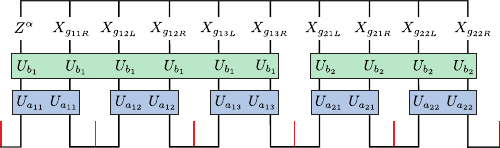}
     \end{aligned}
    \label{eq:GHZoutput0a}
\end{multline}
Here, we use colors to indicate the origin of each object. We denote by $b_j$ the group elements obtained from decomposing the measurements outcomes of the second round via Eq.~\eqref{eq:appendixprojectorequation}, and by $a_{jk}$ those arising from the measurement outcomes in the first round. Inspecting the graphical object in Eq.~\eqref{eq:GHZoutput0a}, we see that the last term is a product of objects of the form
\begin{equation}
    \begin{aligned}
        \includegraphics[width=0.14\linewidth]{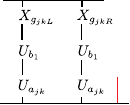}
    \end{aligned}
    \label{Eq:GHZbridge}
\end{equation}
As $U_g$ is the regular representation and $X_g$ is the right-regular representation, we can evaluate this object as
\begin{equation}
   \sum_{e, d_{jk}, d_{jk\ominus1}\in G} \delta[e=b_{j}a_{jk}d_{jk\ominus 1}g_{1jkL} ] 
       \delta[e=b_{j}a_{jk}d_{jk}g_{1jkR} ]
  \ketbra{e}\otimes \ket{d_{jk\ominus1}}\otimes \ket{d_{jk}} \otimes \ket{d_{jk}}.
\end{equation}
Here, we have introduced the notation $\ominus$ to indicate the previous value of the multi-index (e.g, $(1,2)\ominus1 = (1,1)$ and $(2,1)\ominus 1= (1,N_{11})$). The graphical object in Eq.~\eqref{eq:GHZoutput0a} is a product of \eqref{Eq:GHZbridge} terms, with exception of the first term which required additional care. Firstly, according to the above notation, the first term will also contain $d_{11\ominus 1}$. To simplify, we write this as $d_{11\ominus 1} = d$. Secondly, we have a $Z^{\alpha_1}$ instead of a $X_g$. Therefore, we define $X_{g11L}=\id$ and the $Z^{\alpha_1}$ will yield the phase  $\chi^{\alpha_1}_{b_1a_11 d}$. Putting this all together, Eq.~\eqref{eq:GHZoutput0a} becomes
\begin{multline}
    \sum_{\substack{\{b_{j}\in G_{M-2}\}_{j}\\
    \{a_{jk}\in G_{M-1}\}_{jk}}} 
    \left( 
    \prod_{j=1}^{N_1^{(2)}} \frac{e^{-i \phi^{x_{1j}^{(2)}}_{[b_{j}]}}}{|G_{M-2}|} 
    \left( 
    \prod_{k=1}^{N^{(1)}_{1j}} \frac{e^{-i \phi^{x_{1jk}^{(1)}}_{[a_{jk}]}}}{|G_{M-1}|} 
    \right)
    \right)\\
    \sum_{\substack{\{d_{jk}\in G\}_{jk}\\d,e\in G}} \chi^{\alpha_1}_{b_{1}a_{11}d} 
    \Bigg(
    \prod_{j=1}^{N^{(2)}_{1}}\prod_{k=1}^{N^{(1)}_{1j}}  \delta[e=b_{j}a_{jk}d_{jk\ominus 1}g_{1jkL}]
    \delta[e=b_{j}a_{jk}d_{jk}g_{1jkR} ]\Bigg)
    |d\rangle\langle d_{N^{(2)}_1  N^{(1)}_{1,N^{(2)}_1}}| \otimes \bigotimes_{jk} \ket{d_{jk}}.
    \label{eq:GHZoutput1}
\end{multline}
As $\chi^{\alpha_1}$ is a 1D irrep of $G$ with kernel $G_1\unrhd G_{m\ge1}$, we have that $\chi^{\alpha_i}_{b_{1}a_{11}d}=\chi^{\alpha_i}_{d}$. Next, note the LHS and first two terms on the RHS of each delta function constraint are identical. Therefore, rearranging terms we get the recursive conditions
\begin{align}
    d_{jk\oplus1} = d_{jk}g_{1jk},
\end{align}
where we define $(g_{1jkL})(g_{1jkR})^{-1}=g_{1jk}$. Therefore, we have
\begin{align}
    d_{jk} = d \left( \prod_{pq=111}^{jk} g_{1pq}\right) \equiv d_{d,1jk}.
\end{align}
Note, in a slight abuse of notation, $d_{d,1jk}$ has $d$ in its indices to indicate that it is determined by $d$ (which we are still summing over). Therefore, summing over $d_{jk}$ and $e$, Eq.~\eqref{eq:GHZoutput1} simplifies to
\begin{multline}
    \sum_{\substack{\{b_{j}\in G_{M-2}\}_{j}\\
    \{a_{jk}\in G_{M-1}\}_{jk}}} 
    \left( 
    \prod_{j=1}^{N_1^{(2)}} \frac{e^{-i \phi^{x_{1j}^{(2)}}_{[b_{j}]}}}{|G_{M-2}|} 
    \left( 
    \prod_{k=1}^{N^{(1)}_{1j}} \frac{e^{-i \phi^{x_{1jk}^{(1)}}_{[a_{jk}]}}}{|G_{M-1}|} 
    \right)
    \right) \\
    \sum_{\substack{\{d_{jk}\in G\}_{jk}\\d,e\in G}}
    \Bigg(
    \prod_{j=1}^{N^{(2)}_{1}}\prod_{k=1}^{N^{(1)}_{1j}}  \delta[e=b_{j}a_{jk}d_{d,1jk}g_{1jkR} ] \Bigg) Z^{\alpha_i}|d\rangle\langle  d| X_{\left( \prod_{pq} g_{1pq}\right)}^\dagger \otimes  \bigotimes_{jk} X_{\left( \prod_{pq=11}^{jk} g_{ipq}\right)}  |d \rangle.
    \label{eq:GHZoutput2}
\end{multline}
Here, the last term contains a GHZ-like contribution (up to quasi-commuting $X_g$ and $Z^\alpha$ operators). However, the delta functions still couple the $d$ variables associated with the GHZ term to the $b_j$ and $a_{ij}$ variables, which must be summed over. We can rearrange the remaining delta-functions to obtain
\begin{equation}
    \delta\Big[b_{j}a_{jk} d_{d,1jk} g_{1jkR}=\nonumber b_{(jk\oplus1)_{1}}a_{jk\oplus1}d_{d,1(jk\oplus1)} g_{1(jk\oplus 1)R} \Big],
    \label{eq:GHZoutput3}
\end{equation}
where $\oplus$, in the same spirit as $\ominus$, indicates moving to the next multi-index, e.g., $(1,1)\oplus1=(1,2)$. Now we distinguish two cases, depending on whether $jk \mapsto jk\oplus 1$ crosses a substring boundary of $x^{(1)}$. First, suppose it does not; i.e., assume that $(jk\oplus1)=j(k\oplus1)$. Thus, Eq.~\eqref{eq:GHZoutput3} becomes
\begin{align}
    &\delta\Big[a_{jk} d_{d,1jk} g_{1jkR} =a_{j(k\oplus1)} d_{d,1j(k\oplus1)} g_{1j(k\oplus 1)R} \Big] \\
    =\ &\delta[d_{d,1jk}^{-1}a_{j(k\oplus1)}^{-1}a_{jk}d_{d,1jk} = g_{1j(k\oplus1)L}g_{1jkR}^{-1}],
\end{align}
where we have used $d_{d,1(jk\oplus1)}=d_{d,ijk}g_{1j(k\oplus1)}$ and $g_{1j(k\oplus1)}=g_{1j(k\oplus1)L}g_{1j(k\oplus1)R}^{-1}$. As $a_{jk}\in G_{M-1}\unlhd G$, the above condition gives us a constraint that $g_{1j(k\oplus1)L}g_{1jkR}^{-1}\in G_{M-1}$. 
Moreover, we may eliminate $d_{d,1jk}\in G$ (which recall, depends on $d$) by noting that as $a_{jk}\in G_{m-1}$, it commutes with all elements in $G$. 
Thus, we have
\begin{align}
    &= \delta[a_{j(k\oplus1)}=a_{jk} \tilde{g}_{1jk} ]\delta[\tilde{g}_{1jk}\in G_{M-1}]\nonumber\\
    &= \delta\left[a_{jk}=a_{j1}  \left(\prod_{q=1}^{k} \tilde{g}_{1jq} \right)\right] \delta[\tilde{g}_{1jk}\in G_{M-1}],
\end{align}
for $k\in \{2,\dots,N_{1j}^{(2)}\}$ where we have defined $\tilde{g}_{1jk}=g_{1jkR}g_{1j(k\oplus1)L}^{-1}$.

We now do the same considering when $jk \mapsto jk\oplus 1$ crosses a substring boundary of $x^{(1)}$, i.e., $k=N_{1j}^{(2)}$ and $(jk\oplus1)=(j\oplus1),1$. This time Eq.~\eqref{eq:GHZoutput3} reduces to
\begin{align}
    &\delta\Big[b_{j}a_{jN_{1j}^{(2)}} d_{d,1jN_{1j}^{(2)}} g_{1jN_{1j}^{(2)}R}= b_{j\oplus1}a_{(j\oplus1) 1}d_{d,1(j\oplus1)1} g_{1(j\oplus 1)1R} \Big] \nonumber \\
    = \ & \delta\Big[ d_{d,1jN_{1j}^{(2)}}^{-1} a_{(j\oplus1) 1}^{-1} b_{j\oplus1}^{-1} b_{j}a_{jN_{1j}^{(2)}} d_{d,1jN_{1j}^{(2)}} =  g_{1(j\oplus 1)1L} g_{1jN_{1j}^{(2)}R}^{-1} \Big].
\end{align}
Thus, this time we get the constraint $\tilde{g}_{{1(j\oplus 1)1}}\in G_{m-2}$. Moreover, for all $g\in G$ and $b\in G_{m-2}$, there is some $a_{b,g}\in G_{m-1}$ such that we have $g b= b g a_{b,g}$. Therefore, rearranging, we have
\begin{align}
    &b_{(j\oplus1)}=b_{j} \tilde{g}_{1(j\oplus1)1} \tilde{a}_{d,1jk}\\
    \Rightarrow\  & b_{j}=b_{1} \left(\prod_{p=1}^j  \tilde{g}_{1p1} \right)\tilde{\tilde{a}}_{d,1jk} \nonumber\\
    \Rightarrow \ & [b_{j}]=[b_{1} \left(\prod_{p=1}^j  \tilde{g}_{1p1} \right)],
\end{align}
for some $\tilde{a}_{d,1jk}, \tilde{\tilde{a}}_{d,1jk}\in G_{M-1}$ which depend on $j,k$ and $d$. Note, however, in taking the quotient map, we lose the dependence on $d$. 
Putting this all together, Eq.~\eqref{eq:GHZoutput2} becomes
\begin{multline}
    \left(\prod_{j\ge 2} \delta[\tilde{g}_{1j1}\in G_{M-2}]\right) \left(\prod_{jk\ne{j\ge2,1}} \delta[\tilde{g}_{1jk}\in G_{M-1}]\right)\\ 
    \left( 
    \prod_{j=1}^{N^{(2)}_{1}} \frac{e^{-i \phi^{x_{1j}^{(2)}}_{[\prod_{p=1}^j  \tilde{g}_{1p1} ]}}}{|G_{M-2}|} 
    \Bigg(\prod_{k=1}^{N^{(1)}_{1j}} \frac{e^{-i \phi^{x_{1jk}^{(1)}}_{\left[ \prod_{q=1}^{k} \tilde{g}_{1jq} \right]}}}{|G_{M-1}|} \Bigg)
    \right)
    \sum_{d\in G}  Z^{\alpha_i}|d\rangle\langle  d|  
    X_{\left( \prod_{pq} g_{1pq}\right)}^\dagger
    \otimes  \bigotimes_{jk} X_{\left( \prod_{pq=11}^{jk} g_{ipq}\right)}  |d \rangle.
    \label{eq:GHZoutput4}
\end{multline}
That is, for each substring, $\vec{x}_{i}^{(2)}$, we get a segment of the desired GHZ state up to proportionality and quasi-commuting local unitaries.  

The same argument holds for the substrings associated with Part 2.
The only difference is there is no physical site associated with the \eqref{Eq:GHZbridge} terms.
In the Part 2 protocol, it is possible to ``skip'' measurements associated with a given level of the irrep hierarchy (see measurements associated with the first level of the irrep hierarchy in last image of Fig.~\ref{fig:GHZProtocolPart2}).
However, the calculation goes through as before because $P_{[\alpha]_{m-1}}^{(n)}=P_{[\alpha]_{m-1}}^{(n)}P_{[1]_m}^{(n)}$, so we can simply insert $P_{[1]_m}^{(n)}$ wherever a measurement has been ``skipped". 
Therefore, putting all the substrings together, the final output state is, up to proportionality and quasi-commuting unitaries, a GHZ state.

Finally, we must account for what happens when we require circuits between measurements. In these cases, the delta functions in Eq.~\eqref{eq:GHZoutput1} become
\begin{align}
&\delta[e=b_{ijk}a_{ijkm}d_{\sigma^{-1}(ijkm)\ominus 1}g_{ijkmL} ]\
\delta[d_{\sigma^{-1}(ijkm)\ominus1}g_{ijkm}=d_{\sigma^{-1}(ijkm)}] \label{eq:GHZOutput5}
\end{align}
where $\sigma$ corresponds to the cumulative permutation implements by the circuits during the protocol. 
However, all this does is rearrange the arguments of the delta functions and the product of those rearranged delta functions is the same object (e.g., $\delta[a=b]\delta[b=c]=\delta[a=c]\delta[c=b]$).
Thus, the outputted state is still a GHZ state up to quasi-commuting unitaries.
Finally, it is clear how the techniques demonstrated in this proof generalize to arbitrary class-$M$ nilpotent groups.

\section{GHZ protocol: Probability distribution of measurement outcomes}\label{Appendix:ProbabilityGHZ}

In this section, we prove that the probability distribution for each measurement outcome during the protocol of Section \ref{sec:GHZProtocol} is uniform, with a probability depending only on the dimension of the corresponding subspace onto which you are projecting.  
As before, it is sufficient to consider a measurement in round $m=3$ having previously measured $x^{(1)}$ and $x^{(2)}$. 
All other rounds can be derived similarly.
Moreover, as before, it is easier to first consider a sequence of outcomes which happen to require no circuits between measurements and then consider the effect of the circuits at the end.
Therefore, we have (compare with Eq.\eqref{eq:SPTEq1})
\begin{align}
p(\vec{x}^{(3)} | \vec{x}^{(1)},\vec{x}^{(2)}) =  \frac{
\begin{aligned}
\includegraphics[width=0.7\linewidth]{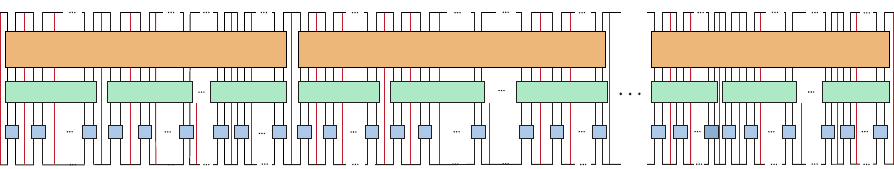}
\end{aligned}
}
{
\begin{aligned}
\includegraphics[width=0.7\linewidth]{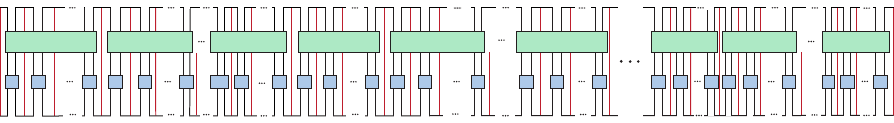}
\end{aligned}} 
\label{eq:GHZEq1}
\end{align}
where the blue boxes correspond to round 1 measurements, green boxes round 2 and orange boxes round 3.
As we assume no circuits are required, we can decompose the measurement outcomes of each round in Appendix \ref{Appendix:VerificationGHZIsOutputed}, e.g., $\vec{x}^{(2)}=(\vec{x}^{(2)}_i)^{N^{(3)}}_{i=1}=((x_{ij}^{(2)})_{j=1}^{N_i^{(2)}})_{i=1}^{N^{(3)}}$.
Let us consider the left-most measurement from round $3$ in the numerator of Eq.~\eqref{eq:GHZEq1}:
\begin{align}
    \begin{aligned}
        \includegraphics[width=0.5\linewidth]{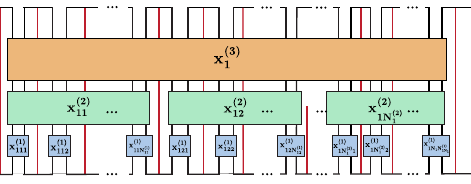}
    \end{aligned}
    \label{eq:ghzprobsegment}
\end{align}
As before, we use Lemma~\ref{lem:IrrepSumTrick} to evaluate this expression. Again, we will have loops of the form
\begin{align}
    \begin{aligned}
        \includegraphics[width=0.22\linewidth]{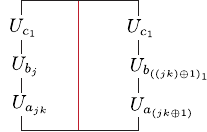}
    \end{aligned}
\end{align}
where, as in Appendix~\ref{Appendix:VerificationGHZIsOutputed}, the $\oplus$ notation indicates the next entry in the multi-index, e.g., $(1,1)\oplus1 =(1,2)$, and the subscript picks out the corresponding sublist, e.g., $(1,2)_1=(1)$.
The physical leg (red line) force this to evaluate as
\begin{equation}
    |G| \delta\left[ c_i b_{ij}a_{ijk}=1\right] \delta\left[ c_i b_{i(jk\oplus1)_1}a_{i(jk\oplus 1)}=1\right],
\end{equation}
where $1$ is the identity element of $G$.
Note, the second delta function will appear in the next loop. 
Thus, Eq.~\eqref{eq:ghzprobsegment} can be evaluated as
\begin{multline}
    \sum_{\{c_1\in G_{M-3}\}}
    \sum_{\{b_{1j}\in G_{M-2}\}_{j}}
    \sum_{\{a_{1jk}\in G_{M-1}\}_{jk}}
    \left(
    \frac{e^{-i \phi^{x_{1}^{(3)}}_{[c_1]}}}{|G_{M-3}|} 
    \left( 
    \prod_{j=1}^{N^{(2)}_{1}} \frac{e^{-i \phi^{x_{1j}^{(2)}}_{[b_{1j}]}}}{|G_{M-2}|} 
    \left( 
    \prod_{k=1}^{N^{(1)}_{1j}} \frac{e^{-i \phi^{x_{1jk}^{(1)}}_{[a_{1jk}]}}}{|G_{M-1}|} 
    \right)
    \right) 
    \right)\\
    \Bigg(
    \prod_{j=1}^{N^{(2)}_{1}}\prod_{k=1}^{N^{(1)}_{1j}}
    |G| \delta[c_1 b_{1j} a_{1jk}=1 ] \Bigg) \sum_{d\in G} \bra{d}\langle c_1 b_{1N_1^{(2)}} a_{1N_1^{(2)}N^{(3)}_{1N_1^{(2)}}} d|  
\end{multline}
Now consider the delta-functions. 
The first delta function is only satisfied if $c_1=a_{111}^{-1}b_{11}^{-1}\in G_{M-2}$.
Therefore $[c_1]=[1]$.
Moreover, we must have $c_1b_{11}a_{111}=1=c_1b_{1j}a_{1jk}$ for all $j,k$.
Thus, $a_{1jk}=a_{1j1}$ and therefore $[a_{1jk}]=[a_{1j1}]$ for all $j$ and $k$ .
Moreover, $b_{1j}=b_{11}a_{111}a_{1j1}^{-1}$ and therefore $[b_{1j}]=[b_{11}]$ for all $j$. As $\circ_k x_{1jk}^{(1)}=[1]$ and $\circ_j x^{(2)}_{1j}=[1]$ for all $j,k$, Eq.~\eqref{eq:ghzprobsegment} therefore ultimately reduces to
\begin{align}
    \frac{|G_{M-2}|}{|G_{M-3}|} 
    \left(\frac{|G_{M-1}|}{|G_{M-2}|}\right)^{N_1^{(2)}} 
    \left(|G|\frac{|G_{M}|}{|G_{M-1}|}\right)^{\sum_{j} N^{(1)}_{1j}}
    \sum_{d\in G} \bra{dd}.
\end{align}
Eq.~\eqref{eq:ghzprobsegment} is just the first term in the numerator of Eq.~\eqref{eq:GHZEq1}, but its clear that the next term will proceed the same way, and so on for each $i$ up until $N^{(3)}$.
Moreover, the denominator will also evaluate the same way but without the $\frac{|G_{M-2}|}{|G_{M-3}|}$ term. 
Thus the overall probability is given by
\begin{align}
    p(x^{(3)}|x^{(2)}x^{(1)})=\frac{|G_{M-2}|}{|G_{M-3}|}^{N^{(3)}}.  
\end{align}
That is, $p(x^{(3)}|x^{(2)}x^{(1)})$ a uniform multinomial distribution, without any parity constraints. 
This same argument will hold for all measurement rounds in Part 1.
Moreover, because $P_{[\alpha]_{m-1}}^{(n)}=P_{[\alpha]_{m-1}}^{(n)}P_{[1]_m}^{(n)}$, it follows that the same holds for the measurements in Part 2.
Finally, if there are circuits between rounds, this only introduces a permutation to indices in the delta functions.
As the delta functions are linked, the global probability distribution is the same. 
\end{document}